\DeclareMathAlphabet{\can}{OT1}{cmss}{m}{n}
\newtheorem{thm}{Theorem}[section]
\newtheorem{lem}[thm]{Lemma}
\newtheorem{exa}[thm]{Example}
\theoremstyle{definition}
\theoremstyle{fact}
\theoremstyle{conjecture}
\numberwithin{equation}{section}
\newcommand{\ord}{\operatorname{ord}}
\begin{document}
\title[Cyclic codes]{Weight distribution of two classes of cyclic codes with  respect to two distinct order elements}

\author[C. Li] {Chengju Li}
\address{\rm Department of Mathematics, Nanjing University of Aeronautics and Astronautics,
Nanjing, 211100, P.R. China
} \email{lichengju1987@163.com}
\author[Q. Yue]{Qin Yue}
\address{\rm Department of Mathematics, Nanjing University of Aeronautics and Astronautics,
Nanjing, 211100, P. R. China} \email{yueqin@nuaa.edu.cn}
\thanks{The paper is supported by NNSF of China (No. 11171150)}

\subjclass[2000]{94B15, 11T71, 11T24}
 \keywords{Weight distribution; Cyclic codes; Gauss periods; Character sums}
\begin{abstract} Cyclic codes are an interesting type of linear codes and have wide
applications in communication and storage systems due to their efficient encoding and
decoding algorithms.  Cyclic codes  have been studied
for many years, but their weight distribution are known only for a few cases.
In this paper, let $\Bbb F_r$ be an extension of a finite field $\Bbb F_q$ and  $r=q^m$,
we determine the weight distribution of the cyclic codes
$\mathcal C=\{ c(a, b): a, b \in \Bbb F_r\},$
$$c(a, b)=(\mbox {Tr}_{r/q}(ag_1^0+bg_2^0),  \ldots, \mbox {Tr}_{r/q}(ag_1^{n-1}+bg_2^{n-1})), g_1, g_2\in \Bbb F_r,$$
in the following two cases: (1) $\ord(g_1)=n, n|r-1$ and $g_2=1$; (2) $\ord(g_1)=n$,
$g_2=g_1^2$,  $\ord(g_2)=\frac n 2$, $m=2$ and $\frac{2(r-1)}n|(q+1)$.
\end{abstract}
\maketitle

\section{Introduction}

Let $\Bbb F_q$ be a finite field with $q$ elements, where $q=p^s$, $p$ is a prime, and
$s$ is a positive integer. An $[n, k, d]$ linear code $\mathcal{C}$ is a $k$-dimensional
subspace of $\Bbb F_q^n$ with minimum distance $d$. It is called cyclic if
$(c_0, c_1, \ldots, c_{n-1}) \in \mathcal C$ implies
$(c_{n-1}, c_0, c_1, \ldots, c_{n-2}) \in \mathcal C$.
By identifying the vector $(c_0, c_1, \ldots, c_{n-1}) \in \Bbb F_q^n$ with
$$c_0+c_1x+c_2x^2+\cdots+c_{n-1}c^{n-1} \in \Bbb F_q[x]/(x^n-1),$$
any code $\mathcal C$ of length $n$ over $\Bbb F_q$ corresponds to a subset of
$\Bbb F_q[x]/(x^n-1)$. Then $\mathcal C$ is a cyclic code if and only if the
corresponding subset is an ideal of $\Bbb F_q[x]/(x^n-1)$. Note that every
ideal of $\Bbb F_q[x]/(x^n-1)$ is principal. Hence there is a monic polynomial $g(x)$ of the least
degree such that $\mathcal C=\langle g(x) \rangle$ and $g(x) \mid (x^n-1)$. Then $g(x)$ is called the generator
polynomial and $h(x)=(x^n-1)/g(x)$ is called the parity-check polynomial of the cyclic code
$\mathcal C$. Suppose that $h(x)$ has $t$ irreducible factors over $\Bbb F_q$, we call
$\mathcal C$ the dual of the cyclic code with $t$ zeros.

 Let $A_i$ be the number of codewords
with Hamming weight $i$ in the code $\mathcal C$ of length $n$. The weight enumerator of
$\mathcal C$ is defined by
$$1+A_1x+A_2x^2+\cdots+A_nx^n.$$
The sequence $(1, A_1, A_2, \ldots, A_n)$ is called the weight distribution of the code
$\mathcal C$.  In coding theory it is often desirable to know the weight distributions
of the codes because they can be used to estimate the error correcting capability and
the error probability of error detection and correction with respect to some algorithms.
This is quite useful in practice. Unfortunately, it is a very hard problem in general
and remains open for most cyclic codes.

 Let $r=q^m$ for a positive integer $m$ and $\alpha$ a generator of $\Bbb F_r^\ast$.
Let $h(x)=h_1(x)h_2(x) \cdots h_t(x)$, where $h_i(x) (1 \leq i \leq t)$ are distinct monic irreducible
polynomials over $\Bbb F_q$.  Let $g_i^{-1}=\alpha^{-s_i}$  be a root of $h_i(x)$ and $n_i$ the order of $g_i$
for $0 \leq s_i \leq r-2(1 \leq i \leq t)$. Denote $\delta=\gcd(r-1, s_1, s_2, \ldots, s_t)$ and $n=\frac {r-1} \delta$.
The cyclic code $\mathcal C$ can be defined by
$$\mathcal C=\{ c(a_1, a_2, \ldots, a_t) : a_1, a_2, \ldots, a_t \in \Bbb F_r\},$$
where $$ c(a_1, a_2, \ldots, a_t)=(\mbox{Tr}_{r/q}(\sum_{i=1}^{t}a_ig_i^0), \mbox{Tr}_{r/q}(\sum_{i=1}^{t}a_ig_i^1), \ldots, \mbox{Tr}_{r/q}(\sum_{i=1}^{t}a_ig_i^{n-1}))$$ and
$\mbox{Tr}_{r/q}$ denotes the trace function from $\Bbb F_r$ to $\Bbb F_q$.
It follows from Delsartes Theorem \cite{D} that the code $\mathcal C$ is an $[n, k]$ cyclic code
over $\Bbb F_q$ with the parity-check polynomial $h(x)$, where $k=\deg (h_1(x))+\deg (h_2(x))+ \cdots +
\deg (h_t(x))$.  The weight distributions of such cyclic codes have been studied for many years and are known
in some cases. We describe the known results as follows.
 \begin{enumerate}
   \item For $t=1$, $\mathcal C$ is called an irreducible cyclic code. The weight distributions
   of irreducible cyclic codes have been extensively studied and can be found in
   \cite{BM72, BMY, DG, D09, DY, MC1, MC2}.
   \item  For $t=2$, i.e., $h(x)=h_1(x)h_2(x)$. The duals of the cyclic codes with two zeros have been
   well investigated when $\deg (h_1(x))=\deg (h_2(x))$.  If $g_1$ and
   $g_2$ have the same order in $\Bbb F_r^\ast$, we know $\deg (h_1(x))=\deg (h_2(x))$. Then the weight distribution of such
   cyclic codes had been determined for some special cases \cite{DMLZ, FL, FM, Ma, V, W, X1, X2, X3, ZD}.
   If $\Bbb F_r^\ast=\langle g_1 \rangle=\langle g_2 \rangle$, then the weight distribution of
   the code $\mathcal C$ which is called the dual of primitive cyclic code with two zeros had been studied
   in \cite{BM, CCD, CCZ, C, F, LTW, Mc, M, S, YCD}.
   \item For $t=3$.  The results on the weight distribution of  cyclic codes $\mathcal C$  with three zeros can be found
   in \cite{LF, Z, ZDLZ}.
   \item For arbitrary $t$. Yang et al. \cite{YXD} described a class of the duals of cyclic codes with $t$ zeros and determined their weight
   distributions under special conditions. Li et al. \cite{LHFG} also studied such cyclic codes  and developed a connection
   between the weight distribution and the spectra of Hermitian forms graphs.
 \end{enumerate}

In this paper, we shall determine the weight distributions
 of two classes of cyclic codes whose duals have two zeros.
 Let $\alpha$ be a primitive element of $\Bbb F_r$ and $r-1=nN$ for two positive integers $n>1$ and $N>1$.
 We mainly consider the following two cases of the cyclic code $\mathcal C$.

 \begin{enumerate}
   \item  Assume that the order of $g_1$ is $n$ and the order of $g_2$  is $1$. Then we can set $g=g_1=\alpha^N$, $g_2=1$, and
   $$\mathcal C_1=\{ c(a, b): a, b \in \Bbb F_r\},$$
   where $$c(a, b)=(\mbox {Tr}_{r/q}(ag^0+b), \mbox {Tr}_{r/q}(ag^1+b), \ldots, \mbox {Tr}_{r/q}(ag^{n-1}+b)).$$
   It is obvious that the order of $g$ is not equal to $1$ and thus the parity-check polynomial of $\mathcal C_1$ is
   $(x-1)h_{g^{-1}}(x)$, where $h_{g^{-1}}(x)$ is the minimal polynomial of $g^{-1}$ over $\Bbb F_q$. The lower bound on the minimum weight of
$\mathcal C_1$ had been given by Ding \cite{D13}. We can also get a tight bound on the minimum weight of such cyclic code.
   \item Assume that the order of $g_1$ is $n$($n$ is even) and the order of $g_2$  is $\frac n 2$. Then we can set $g=g_1=\alpha^N$
   and $g_2=\mu g^2$ for $\mu \in \Bbb F_r^\ast$, where the order of $\mu$ is a divisor of $\frac n 2$. Denote
      $$\mathcal C_2=\{ c(a, b): a, b \in \Bbb F_r\},$$
      where $$c(a,b)=(\mbox {Tr}_{r/q}(ag^0+b(\mu g^2)^0), \mbox {Tr}_{r/q}(ag^1+b(\mu g^2)^1), \ldots, \mbox {Tr}_{r/q}(ag^{n-1}+b(\mu g^2)^{n-1})).$$
   It is easily known that $g$ and $\mu g^2$ are not conjugates of each other due to their distinct orders
    and then $h_{g^{-1}}(x)$ and $h_{(\mu g^2)^{-1}}(x)$ are distinct, where
   $h_{g^{-1}}(x)$ and $h_{(\mu g^2)^{-1}}(x)$ are the minimal polynomial of $g^{-1}$ and $(\mu g^2)^{-1}$ over $\Bbb F_q$, respectively.
   We know that the parity-check polynomial of $\mathcal C_2$ is $h_{g^{-1}}(x)h_{(\mu g^2)^{-1}}(x)$.
   For convenience, we shall present a method to determine the
   weight distribution of the cyclic code $\mathcal C_2$ when $\mu=1$. The general case can be also dealt with
   by our method and the method in \cite{DMLZ} or \cite{Ma}. In addition, we present a tight bound on the minimum weight of cyclic code
   $\mathcal C_2$.
 \end{enumerate}

 This paper is organized as follows. In Section 2, we introduce some results about Gauss periods.
 In Section 3 and 4, we shall determine the weight distributions of the
  cyclic codes $\mathcal C_1$ and $\mathcal C_2$, respectively.

\section{Gauss periods}

Let ${\Bbb F}_r$ be the finite field with $r$ elements, where $r$ is a power of prime $p$.
For any $a \in {\Bbb F}_r$, we can define
an additive character of the finite field ${\Bbb F}_r$ as follows:
$$\psi_a : {\Bbb F}_r \rightarrow {\Bbb C}^\ast, \psi_a(x)=\zeta_p^{\mbox{Tr}_{r/p}(ax)},$$
where $\zeta_p=e^{\frac{2\pi i}p}$ is a $p$-th primitive  root of unity and $\mbox{Tr}_{r/p}$ denotes
the trace from ${\Bbb F}_r$ to ${\Bbb F}_p$. If $a=1$, then $\psi_1$ is
called the canonical additive character of $\Bbb F_r$. The orthogonal property of additive characters which can
be found in \cite{LN} is given by
$$\sum_{x \in \Bbb F_r} \psi_1(ax)=\left\{\begin{array}{ll}
r, \mbox{ if } a=0; \\
0, \mbox{ if } a \in {\Bbb F}_r ^\ast. \end{array}\right.$$

Let $r-1=nN$ and $\alpha$ a fixed primitive element of $\Bbb F_r$, where $r=q^m=p^{sm}$.
We define $C_i^{(N, r)}=\alpha^i \langle \alpha^N \rangle$ for $i=0, 1, \ldots, N-1$,
where $\langle \alpha^N \rangle$ denotes the subgroup of $\Bbb F_r^\ast$ generated
by $\alpha^N$. The Gauss periods of order $N$ are given by
$$\eta_i^{(N, r)}=\sum_{x \in C_i^{(N, r)}}\psi(x),$$
where $\psi$ is the canonical additive character of $\Bbb F_r$ and $\eta_i^{(N, r)}=\eta_{i \pmod N}^{(N, r)}$ if $i \geq N$.
In general, the explicit evaluation of Gauss periods is a very difficult problem.
 However, they can be computed in a few cases.

\begin{lem} \cite{My} When $N=2$, the Gauss periods are given by
$$\eta_0^{(2, r)}=\left\{\begin{array}{ll}
\frac {-1+(-1)^{sm-1} \sqrt r} 2, \mbox{ if } p \equiv 1 \pmod 4, \\
\frac {-1+(-1)^{sm-1} (\sqrt {-1})^{sm} \sqrt r} 2, \mbox{ if }  p \equiv 3 \pmod 4, \end{array}\right.$$
and $\eta_1^{(2, r)}=-1-\eta_0^{(2, r)}$.
\end{lem}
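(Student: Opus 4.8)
The plan is to recognize $\eta_0^{(2,r)}$ as essentially the quadratic Gauss sum over $\Bbb F_r$ and then invoke its classical sign evaluation. Since $N=2$, the coset $C_0^{(2,r)}=\langle\alpha^2\rangle$ is exactly the set of nonzero squares of $\Bbb F_r$ and $C_1^{(2,r)}=\alpha\langle\alpha^2\rangle$ the set of nonsquares. Let $\chi$ be the quadratic multiplicative character of $\Bbb F_r^\ast$, so that $\chi(x)=1$ for $x\in C_0^{(2,r)}$ and $\chi(x)=-1$ for $x\in C_1^{(2,r)}$. Writing $G(\chi)=\sum_{x\in\Bbb F_r^\ast}\chi(x)\psi(x)$ for the associated Gauss sum, I would first record the two identities
$$\eta_0^{(2,r)}+\eta_1^{(2,r)}=\sum_{x\in\Bbb F_r^\ast}\psi(x)=-1,\qquad \eta_0^{(2,r)}-\eta_1^{(2,r)}=G(\chi),$$
the first coming from the orthogonality relation for $\psi$ and the second from splitting $G(\chi)$ according to whether $x$ is a square. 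Solving this linear system immediately gives $\eta_1^{(2,r)}=-1-\eta_0^{(2,r)}$ together with $\eta_0^{(2,r)}=\tfrac{-1+G(\chi)}{2}$, so the lemma reduces to the evaluation of $G(\chi)$.

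The heart of the argument is the explicit value of this quadratic Gauss sum. I would begin at the base field $\Bbb F_p$: if $\chi_1$ denotes the quadratic character of $\Bbb F_p^\ast$, then the classical theorem of Gauss on the sign of the quadratic Gauss sum gives
$$G(\chi_1)=\begin{cases}\sqrt p, & p\equiv 1\pmod 4,\\ \sqrt{-1}\,\sqrt p, & p\equiv 3\pmod 4.\end{cases}$$
To transfer this to $\Bbb F_r=\Bbb F_{p^{sm}}$, I would apply the Hasse--Davenport lifting relation: since the quadratic character $\chi$ of $\Bbb F_r$ is the lift of $\chi_1$ through the norm map $N_{\Bbb F_r/\Bbb F_p}$, and the canonical additive character $\psi$ of $\Bbb F_r$ is the lift of the canonical character of $\Bbb F_p$ through $\mbox{Tr}_{r/p}$, one has $-G(\chi)=(-G(\chi_1))^{sm}$.

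Finally I would compute $(-G(\chi_1))^{sm}$ in the two congruence classes. When $p\equiv 1\pmod 4$ we get $(-\sqrt p)^{sm}=(-1)^{sm}\sqrt r$, whence $G(\chi)=(-1)^{sm-1}\sqrt r$; when $p\equiv 3\pmod 4$ we get $(-\sqrt{-1}\,\sqrt p)^{sm}=(-1)^{sm}(\sqrt{-1})^{sm}\sqrt r$, whence $G(\chi)=(-1)^{sm-1}(\sqrt{-1})^{sm}\sqrt r$. Substituting into $\eta_0^{(2,r)}=\tfrac{-1+G(\chi)}{2}$ yields the two displayed formulas, and $\eta_1^{(2,r)}=-1-\eta_0^{(2,r)}$ was already obtained. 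The main obstacle is the base-field sign determination of $G(\chi_1)$, which is the genuinely delicate classical input; by contrast the reduction to squares, the Hasse--Davenport step, and the final power computations are all routine once that value is in hand.
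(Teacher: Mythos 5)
Your proof is correct. The paper offers no proof of this lemma---it is quoted directly from Myerson's paper on period polynomials and Gauss sums---but your derivation (expressing $\eta_0^{(2,r)}-\eta_1^{(2,r)}$ as the quadratic Gauss sum, pinning down its value over $\Bbb F_p$ by Gauss's sign theorem, and lifting to $\Bbb F_{p^{sm}}$ via Hasse--Davenport) is the standard argument underlying the cited result, and all the sign bookkeeping in your final power computations checks out against the stated formulas.
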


\begin{lem} \cite{My} Let $N=3$. If $p \equiv 1 \pmod 3$ and $sm \equiv 0 \pmod 3$, then
$$\eta_0^{(3, r)}=\frac {-1+c_1 r^{\frac 1 3}} 3 $$ and
$$\{\eta_1^{(3, r)}, \eta_2^{(3, r)}\}=\{
\frac {-1-{\frac 1 2}(c_1+9d_1) r^{\frac 1 3}} 3,
\frac {-1-{\frac 1 2}(c_1-9d_1) r^{\frac 1 3}} 3\},$$
where $c_1$ and $d_1$ are given by $4p^{\frac {sm} 3}=c_1^2+27d_1^2, c_1 \equiv 1 \pmod 3$, and
$\gcd(c_1, p)=1$.
\end{lem}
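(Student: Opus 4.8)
The plan is to convert the additive Gauss periods into multiplicative Gauss sums, then to use the hypothesis $sm\equiv 0\pmod 3$ to descend the resulting cubic Gauss sum over $\mathbb{F}_r$ to a Jacobi sum over the subfield $\mathbb{F}_{r^{1/3}}$, where it becomes explicitly evaluable. First I would fix a multiplicative character $\chi$ of $\mathbb{F}_r^\ast$ of order $3$ with $\chi(\alpha)=\omega$, $\omega=e^{2\pi i/3}$; this exists because $p\equiv1\pmod 3$ forces $3\mid r-1$. The indicator of the coset $C_i^{(3,r)}=\alpha^i\langle\alpha^3\rangle$ is $\tfrac13\sum_{j=0}^{2}\chi^{j}(x)\chi^{-j}(\alpha^i)$, so orthogonality of characters gives
$$\eta_i^{(3,r)}=\sum_{x\in C_i^{(3,r)}}\psi(x)=\frac13\Big(-1+\sum_{j=1}^{2}\chi^{-j}(\alpha^i)\,G(\chi^j)\Big),$$
where $G(\chi^j)=\sum_{x\in\mathbb{F}_r^\ast}\chi^j(x)\psi(x)$ and the term $j=0$ contributes $\sum_{x\in\mathbb{F}_r^\ast}\psi(x)=-1$. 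Since $\chi(-1)=1$ for a cubic character, $G(\chi^2)=G(\overline\chi)=\overline{G(\chi)}$, so the whole computation is governed by the single sum $G(\chi)$; in particular $\eta_0^{(3,r)}=\tfrac13\big(-1+2\,\mathrm{Re}\,G(\chi)\big)$.

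Next I would use $3\mid sm$. Put $r_0=p^{sm/3}=r^{1/3}$; since $p\equiv1\pmod3$ we have $r_0\equiv1\pmod3$, so $\mathbb{F}_{r_0}$ carries a cubic character $\chi_0$, and as the norm $N\colon\mathbb{F}_r^\ast\to\mathbb{F}_{r_0}^\ast$ is surjective we may take $\chi=\chi_0\circ N$. The Davenport--Hasse lifting relation then gives $G(\chi)=G(\chi_0)^3$, and combining $G(\chi_0)^2=J(\chi_0,\chi_0)\,G(\overline{\chi_0})$ with $G(\chi_0)G(\overline{\chi_0})=\chi_0(-1)r_0=r_0$ yields
$$G(\chi)=r_0\,J(\chi_0,\chi_0),\qquad J(\chi_0,\chi_0)=\sum_{x\in\mathbb{F}_{r_0}}\chi_0(x)\chi_0(1-x).$$
This is the crucial point: although a cubic Gauss sum over $\mathbb{F}_r$ is not individually evaluable (Kummer's problem), its descent to the subfield turns it into $r_0$ times a Jacobi sum, which is.

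The main obstacle is then the arithmetic of this cubic Jacobi sum. The facts I would invoke are that $J(\chi_0,\chi_0)$ lies in $\ZZ[\omega]$ with $|J(\chi_0,\chi_0)|^2=r_0$, and that it satisfies a Stickelberger congruence of the form $J(\chi_0,\chi_0)\equiv-1\pmod 3$ in $\ZZ[\omega]$. Writing $J(\chi_0,\chi_0)=a+b\omega$, the norm condition $a^2-ab+b^2=r_0$ rewrites as $4r_0=(2a-b)^2+3b^2=c_1^2+3b^2$ with $c_1=2a-b$; the congruence forces $a\equiv-1$, $b\equiv0\pmod3$, hence $c_1\equiv1\pmod 3$ and $3\mid b$, so setting $d_1=b/3$ gives exactly $4r_0=c_1^2+27d_1^2$. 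Thus $2\,\mathrm{Re}\,J(\chi_0,\chi_0)=c_1$ and $\mathrm{Im}\,J(\chi_0,\chi_0)=\tfrac{3\sqrt3}{2}d_1$. Substituting $G(\chi)=r_0J(\chi_0,\chi_0)$ into $\eta_0^{(3,r)}$ gives $\tfrac{-1+c_1r^{1/3}}{3}$ at once, and substituting into $\eta_1^{(3,r)},\eta_2^{(3,r)}$ (where the coefficients $\chi^{-j}(\alpha^i)$ are the nontrivial powers $\omega^{-1},\omega$) and expanding $\omega=-\tfrac12+\tfrac{\sqrt3}{2}i$ produces the two values $\tfrac{-1-\frac12(c_1\pm9d_1)r^{1/3}}{3}$, the sign of $9d_1$ distinguishing $\eta_1$ from $\eta_2$. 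The only genuinely delicate steps are the normalization $c_1\equiv1\pmod3$ via the mod-$3$ congruence and the identification of $\mathrm{Im}\,J$ with $d_1$; everything else is character-sum bookkeeping.
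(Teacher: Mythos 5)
Your argument is correct: the reduction of $\eta_i^{(3,r)}$ to the cubic Gauss sum $G(\chi)$, the Davenport--Hasse descent to $G(\chi_0)^3=r^{1/3}J(\chi_0,\chi_0)$ over $\Bbb F_{r^{1/3}}$, and the normalization $4r^{1/3}=c_1^2+27d_1^2$ with $c_1=2\,\mathrm{Re}\,J$ via $J\equiv-1\pmod 3$ all check out, and the final real-part computation does yield the stated values (with the expected sign ambiguity in $d_1$, which is why the lemma only pins down $\{\eta_1,\eta_2\}$ as a set). The paper itself offers no proof of this lemma --- it is quoted verbatim from Myerson \cite{My} --- and your derivation is essentially the standard one found in that reference, so there is nothing to fault and no genuinely different route to compare.
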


\begin{lem} \cite{My} Let $N=4$. If $p \equiv 1 \pmod 4$ and $sm \equiv 0 \pmod 4$, then
$$\eta_0^{(4, r)}=\frac {-1-r^{\frac 1 2}-2s_1r^{\frac 1 4}} 4,
\eta_2^{(4, r)}=\frac {-1-r^{\frac 1 2}+2s_1r^{\frac 1 4}} 4 $$ and
$$\{\eta_1^{(4, r)}, \eta_3^{(4, r)}\}=\{
\frac {-1+r^{\frac 1 2}+4t_1r^{\frac 1 4}} 4,
\frac {-1+r^{\frac 1 2}-4t_1r^{\frac 1 4}} 4\},$$
where $s_1$ and $t_1$ are given by $p^{\frac {sm} 2}=s_1^2+4t_1^2, s_1 \equiv 1 \pmod 4$, and
$\gcd(s_1, p)=1$.
\end{lem}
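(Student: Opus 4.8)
The plan is to convert the statement about Gauss periods into one about Gauss sums, evaluate the three nontrivial Gauss sums that occur, and then read off the four periods. Let $\chi$ be a multiplicative character of $\mathbb{F}_r^{\ast}$ of order $4$ with $\chi(\alpha)=\zeta_4$, and write $g(\chi)=\sum_{x\in\mathbb{F}_r^{\ast}}\chi(x)\psi(x)$. Orthogonality of the characters of $\mathbb{F}_r^{\ast}/\langle\alpha^{4}\rangle$ expresses the indicator of the coset $C_i^{(4,r)}$ as $\frac14\sum_{a=0}^{3}\zeta_4^{-ai}\chi^{a}(x)$, so that
\[
\eta_i^{(4,r)}=\frac14\Big(-1+\sum_{a=1}^{3}\zeta_4^{-ai}\,g(\chi^{a})\Big),
\]
using $g(\chi^{0})=-1$. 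Thus it suffices to compute $g(\chi)$, $g(\chi^{2})$ and $g(\chi^{3})=\overline{g(\chi)}$ (the last because $\chi(-1)=1$ under the hypotheses), and substitute; the four asserted values of $\eta_i^{(4,r)}$ are exactly the four discrete Fourier combinations of these sums.

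The quadratic term is immediate: $\chi^{2}$ is the quadratic character, so $g(\chi^{2})$ is the quadratic Gauss sum over $\mathbb{F}_r$. Since $p\equiv1\pmod4$, the classical evaluation (cf.\ the case $N=2$ above) gives $g(\chi^{2})=(-1)^{sm-1}r^{1/2}$, which equals $-r^{1/2}$ because $sm\equiv0\pmod4$. This produces the common $-r^{1/2}$ term of $\eta_0,\eta_2$ and the $+r^{1/2}$ term of $\eta_1,\eta_3$.

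The real content is the quartic sum $g(\chi)$, and this is where the hypotheses enter. The key point is that $p\equiv1\pmod4$ forces $4\mid p^{sm/2}-1$, so the order-$4$ character $\chi$ factors through the norm $N\colon\mathbb{F}_r^{\ast}\to\mathbb{F}_{p^{sm/2}}^{\ast}$ of the subfield over which $\mathbb{F}_r=(p^{sm/2})^{2}$ is a quadratic extension; that is, $\chi=\chi'\circ N$ for a quartic character $\chi'$ of $\mathbb{F}_{p^{sm/2}}^{\ast}$. The Davenport--Hasse lifting relation \cite{LN} then gives $g(\chi)=-g(\chi')^{2}$, and the product formula $g(\chi')^{2}=g(\chi'^{2})\,J(\chi',\chi')$ reduces everything to data over $\mathbb{F}_{p^{sm/2}}$: the quadratic Gauss sum $g(\chi'^{2})=(-1)^{sm/2-1}p^{sm/4}=(-1)^{sm/2-1}r^{1/4}$, and the quartic Jacobi sum $J(\chi',\chi')$, a Gaussian integer of norm $p^{sm/2}$. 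Writing $J(\chi',\chi')=s_1+2t_1\sqrt{-1}$ with $s_1\equiv1\pmod4$ reproduces precisely the decomposition $p^{sm/2}=s_1^{2}+4t_1^{2}$ of the statement, whence $g(\chi)=\pm r^{1/4}(s_1+2t_1\sqrt{-1})$. Taking real and imaginary parts, $g(\chi)+g(\chi^{3})=2\,\mathrm{Re}\,g(\chi)$ supplies the $r^{1/4}s_1$ terms of $\eta_0,\eta_2$, while $g(\chi)-g(\chi^{3})=2\sqrt{-1}\,\mathrm{Im}\,g(\chi)$ supplies the $t_1r^{1/4}$ terms of $\eta_1,\eta_3$; substituting into the display and using $sm\equiv0\pmod4$ to settle all signs yields the four claimed formulas.

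The main obstacle is the quartic Gauss sum $g(\chi)$: over a prime field a quartic Gauss sum is genuinely non-elementary (the classical sign problem of Gauss). What makes the computation go through is exactly the divisibility $sm\equiv0\pmod4$ together with $p\equiv1\pmod4$: these force $\chi$ to descend to the quadratic subfield, so that Davenport--Hasse turns the quartic sum into a square and reduces it to a quadratic Gauss sum and a Jacobi sum of known norm, both evaluable in closed form. The residual ambiguities---the sign of $s_1$ (pinned down by $s_1\equiv1\pmod4$) and the sign of $t_1$ (the choice between $\chi$ and $\overline{\chi}$)---are precisely why the lemma determines $\eta_0^{(4,r)}$ and $\eta_2^{(4,r)}$ individually but can only specify $\{\eta_1^{(4,r)},\eta_3^{(4,r)}\}$ as an unordered pair.
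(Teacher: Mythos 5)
The paper offers no proof of this lemma: it is quoted directly from Myerson \cite{My}, so there is no internal argument to compare yours against. Your sketch follows the standard (and essentially the cited source's) route: expand $\eta_i^{(4,r)}$ as a discrete Fourier combination of the Gauss sums $g(\chi^a)$, evaluate $g(\chi^2)=(-1)^{sm-1}\sqrt r=-\sqrt r$ classically, and reduce the quartic sum via Davenport--Hasse and the relation $g(\chi')^2=J(\chi',\chi')\,g(\chi'^2)$ to a quartic Jacobi sum over the quadratic subfield, whose norm gives the representation $p^{sm/2}=s_1^2+4t_1^2$. All of that is correct, including the verifications that $\chi$ descends through the norm (because $4\mid p^{sm/2}-1$) and that $\chi(-1)=1$, and your closing remark correctly explains why only the unordered pair $\{\eta_1,\eta_3\}$ is determined.

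The one genuine gap is the sign that separates $\eta_0$ from $\eta_2$. Taken literally, your chain gives $g(\chi)=-g(\chi')^2=-J(\chi',\chi')\,g(\chi'^2)=+r^{1/4}(s_1+2t_1\sqrt{-1})$ (since $g(\chi'^2)=(-1)^{sm/2-1}r^{1/4}=-r^{1/4}$ when $sm\equiv 0\pmod 4$), hence $\mathrm{Re}\,g(\chi)=+s_1r^{1/4}$ and $\eta_0^{(4,r)}=\frac{-1-\sqrt r+2s_1r^{1/4}}{4}$ --- the \emph{opposite} sign from the lemma. The discrepancy is absorbed into your ``$\pm$'' and ``settle all signs,'' but this is precisely the non-routine point: one must invoke the congruence characterizing the quartic Jacobi sum (e.g.\ $J(\chi',\chi')\equiv -1$ modulo $2(1+\sqrt{-1})$ in $\mathbb{Z}[\sqrt{-1}]$, suitably extended to the non-prime field $\mathbb{F}_{p^{sm/2}}$) to decide whether, under the normalization $s_1\equiv 1\pmod 4$, the Jacobi sum equals $+(s_1+2t_1\sqrt{-1})$ or $-(s_1+2t_1\sqrt{-1})$. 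Since telling $\eta_0$ apart from $\eta_2$ is the entire content of the first display of the lemma, this sign must be proved, not deferred; everything else in your outline is sound.
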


The Gauss periods in the semi-primitive case are known and are described in the following lemma.

\begin{lem} \cite{My} Assume that there exists a least positive integer $e$ such that
$p^e \equiv -1\pmod N$. Let $r=p^{2ef}$ for some positive integer $f$.
\begin{enumerate}
  \item If $f, p$, and $\frac {p^e+1} N$ are all odd, then
  $$\eta_{N/2}^{(N, r)}=\frac {(N-1)\sqrt r-1} N, \eta_i^{(N, r)}=-\frac {\sqrt r+1} N \mbox { for }
  i \neq N/2.$$
  \item In all other cases,
  $$\eta_0^{(N, r)}=\frac {(-1)^{f+1}(N-1)\sqrt r-1} N, \eta_i^{(N, r)}=\frac {(-1)^f\sqrt r-1} N
  \mbox { for } i \neq 0.$$
\end{enumerate}
\end{lem}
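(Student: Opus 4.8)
The plan is to reduce the evaluation of the Gauss periods $\eta_i^{(N,r)}$ to the evaluation of Gauss sums, for which the semi-primitive hypothesis $p^e\equiv -1\pmod N$ forces an essentially closed form. Fix a multiplicative character $\chi$ of $\Bbb F_r^\ast$ of order $N$ normalized by $\chi(\alpha)=\zeta_N$, where $\zeta_N=e^{2\pi i/N}$, and write $G(\chi^j)=\sum_{x\in\Bbb F_r^\ast}\chi^j(x)\psi(x)$ for the associated Gauss sums. First I would record the Fourier expansion of the indicator function of the coset $C_i^{(N,r)}=\alpha^i\langle\alpha^N\rangle$, namely $\mathbbm{1}_{C_i^{(N,r)}}(x)=\frac1N\sum_{j=0}^{N-1}\overline{\chi^j(\alpha^i)}\,\chi^j(x)$ for $x\in\Bbb F_r^\ast$; substituting into the definition $\eta_i^{(N,r)}=\sum_{x\in C_i^{(N,r)}}\psi(x)$ and interchanging the two sums yields
\[
\eta_i^{(N,r)}=\frac1N\sum_{j=0}^{N-1}\overline{\chi^j(\alpha^i)}\,G(\chi^j)
=\frac1N\Bigl(-1+\sum_{j=1}^{N-1}\zeta_N^{-ij}G(\chi^j)\Bigr),
\]
using that $G(\chi^0)=\sum_{x\in\Bbb F_r^\ast}\psi(x)=-1$ and $\overline{\chi^j(\alpha^i)}=\zeta_N^{-ij}$. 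This turns the whole problem into computing the $N-1$ nontrivial Gauss sums and then summing a short geometric-type series in $\zeta_N$.

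The second, and genuinely substantive, step is to evaluate $G(\chi^j)$ for $1\le j\le N-1$ under the semi-primitive hypothesis. The condition $p^e\equiv -1\pmod N$ means that the Frobenius orbit of any nontrivial $\chi^j$ contains its conjugate $\overline{\chi^j}$, which is exactly what makes these Gauss sums \emph{pure}: by the Stickelberger congruence together with the Davenport--Hasse lifting from $\Bbb F_{p^{2e}}$ to $\Bbb F_{p^{2ef}}=\Bbb F_r$, each $G(\chi^j)$ is a rational multiple of $\sqrt r$ of absolute value exactly $\sqrt r$, hence $G(\chi^j)=\pm\sqrt r$. I would then pin down the sign. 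The generic outcome is that every nontrivial Gauss sum carries the common sign $(-1)^{f+1}$, that is $G(\chi^j)=(-1)^{f+1}\sqrt r$ for all $j\not\equiv 0$; the only way an extra sign can enter is through the quadratic character $\chi^{N/2}$ (present since $N$ is even here), whose Gauss sum is governed by the classical quadratic Gauss sum evaluation and flips precisely when the parities of $f$, $p$, and $\tfrac{p^e+1}{N}$ conspire. Tracking this bookkeeping shows that in all cases except ``$f$, $p$, $\tfrac{p^e+1}N$ all odd'' one has the uniform value $G(\chi^j)=(-1)^{f+1}\sqrt r$, while in that exceptional case the signs become $G(\chi^j)=(-1)^{j}\sqrt r=\zeta_N^{jN/2}\sqrt r$, the alternation being caused by the sign flip on the odd powers of $\chi$.

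With the Gauss sums in hand the conclusion is a direct computation. In the generic case, substituting $G(\chi^j)=(-1)^{f+1}\sqrt r$ gives $\eta_i^{(N,r)}=\frac1N\bigl(-1+(-1)^{f+1}\sqrt r\sum_{j=1}^{N-1}\zeta_N^{-ij}\bigr)$, and since $\sum_{j=1}^{N-1}\zeta_N^{-ij}$ equals $N-1$ when $i\equiv 0\pmod N$ and $-1$ otherwise, I obtain $\eta_0^{(N,r)}=\frac{(-1)^{f+1}(N-1)\sqrt r-1}{N}$ and $\eta_i^{(N,r)}=\frac{(-1)^f\sqrt r-1}{N}$ for $i\neq 0$, which is part (2). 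In the exceptional case, substituting $G(\chi^j)=\zeta_N^{jN/2}\sqrt r$ gives $\eta_i^{(N,r)}=\frac1N\bigl(-1+\sqrt r\sum_{j=1}^{N-1}\zeta_N^{j(N/2-i)}\bigr)$; the inner sum is $N-1$ exactly when $i\equiv N/2\pmod N$ and $-1$ otherwise, yielding $\eta_{N/2}^{(N,r)}=\frac{(N-1)\sqrt r-1}{N}$ and $\eta_i^{(N,r)}=-\frac{\sqrt r+1}{N}$ for $i\neq N/2$, which is part (1). I expect the main obstacle to be the clean determination of the Gauss-sum signs---particularly isolating the role of the quadratic character and verifying that the three parity conditions are exactly what trigger the alternating pattern---since everything else is routine character-sum bookkeeping.
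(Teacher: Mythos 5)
The paper offers no proof of this lemma---it is quoted verbatim from Myerson \cite{My}---and your reduction of $\eta_i^{(N,r)}$ to the Gauss sums $G(\chi^j)$ via the coset indicator expansion, followed by the semi-primitive (pure) evaluation $G(\chi^j)=\pm\sqrt r$ and the final geometric-sum computation, is exactly the argument of the cited source, so your outline is correct and matches the intended proof. The one step that genuinely needs care is the one you flag: for composite $N$ the character $\chi^j$ has order $N/\gcd(j,N)$ with its own least exponent $e'$ dividing $e$, so confirming that all $N-1$ signs coincide---except for the alternation $G(\chi^j)=(-1)^j\sqrt r$ forced through the quadratic character when $f$, $p$, and $(p^e+1)/N$ are all odd---requires applying the Baumert--McEliece sign formula to each divisor of $N$ separately, but that bookkeeping does close as you claim.
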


The Gauss periods in the index $2$ case can be described in the following lemma.

\begin{lem} \cite{DY, My} Let $N > 3$  be a prime with $N \equiv 3 \pmod 4$, $p$ a prime such
that $[\Bbb Z_N^\ast : \langle p \rangle]=2$, and $r=p^{\frac {N-1} 2 k}$ for some positive integer $k$.
Let $h$ be the class number of $\Bbb Q(\sqrt {-N})$ and $a,b$ the integers satisfying
$$\left\{\begin{array}{lll}
4p^h=a^2+Nb^2 \\
a \equiv -2p^{\frac {N-1+2h}4} \pmod N \\
b>0, p \nmid b. \end{array}\right.$$
Then the Gauss periods of order $N$ are given by
$$\left\{\begin{array}{lll}
\eta_0^{(N, r)}=\frac 1 N (P^{(k)}A^{(k)}(N-1)-1) \\
\eta_u^{(N,r)}=\frac {-1} N(P^{(k)}A^{(k)}+P^{(k)}B^{(k)}N+1), \mbox{ if } (\frac u N)=1 \\
 \eta_u^{(N,r)}=\frac {-1} N(P^{(k)}A^{(k)}-P^{(k)}B^{(k)}N+1), \mbox{ if } (\frac u N)=-1, \end{array}\right.$$
 where
 $$\left\{\begin{array}{lll}
P^{(k)}=(-1)^{k-1}p^{\frac k 4(N-1-2h)} \\
A^{(k)}=\mbox{Re}(\frac {a+b\sqrt{-N}} 2)^k \\
B^{(k)}=\mbox{Im}(\frac {a+b\sqrt{-N}} 2)^k/ \sqrt N. \end{array}\right.$$
\end{lem}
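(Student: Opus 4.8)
The plan is to pass from the Gauss periods to Gauss sums, to use the index~$2$ hypothesis to cut the number of distinct Gauss sums down to two, and then to invoke the classical evaluation of the index~$2$ Gauss sum. Fix a multiplicative character $\chi$ of $\Bbb F_r^\ast$ of order $N$ normalized by $\chi(\alpha)=\zeta_N$, and set $G(\chi^j)=\sum_{x\in\Bbb F_r^\ast}\chi^j(x)\psi(x)$. Writing the indicator function of the cyclotomic class $C_u^{(N,r)}=\alpha^u\langle\alpha^N\rangle$ as a combination of the characters of order dividing $N$ gives the Fourier inversion
\[
\eta_u^{(N,r)}=\frac1N\Big(-1+\sum_{j=1}^{N-1}\zeta_N^{-uj}\,G(\chi^j)\Big),
\]
since $G(\chi^0)=\sum_{x\in\Bbb F_r^\ast}\psi(x)=-1$. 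The whole problem is thus reduced to evaluating the $N-1$ Gauss sums $G(\chi^j)$.

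The first genuine step is to exploit $[\ZZ_N^\ast:\langle p\rangle]=2$. Because $\Tr_{r/p}$ is invariant under $x\mapsto x^p$, the substitution $x\mapsto x^p$ gives $G(\chi^p)=G(\chi)$, so $G(\chi^j)$ depends only on the coset of $j$ modulo $\langle p\rangle$; since $\langle p\rangle$ has index $2$ in the cyclic group $\ZZ_N^\ast$ it is the subgroup of quadratic residues, and the Gauss sums take only the two values $g_+=G(\chi^j)$ for $\left(\frac jN\right)=1$ and $g_-=G(\chi^j)$ for $\left(\frac jN\right)=-1$. As $N\equiv3\pmod4$ we have $\left(\frac{-1}N\right)=-1$, so $-1$ is a nonresidue; combined with $\chi(-1)=1$ (because $N$ is odd) and $\overline{G(\chi)}=\chi(-1)G(\chi^{-1})$ this gives $g_-=\overline{g_+}$ and $g_+g_-=|G(\chi)|^2=r$. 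Moreover $\chi$ is trivial on $\Bbb F_p^\ast$ (one checks $N\mid(r-1)/(p-1)$ using $\mathrm{ord}_N(p)=(N-1)/2$), so $G(\chi)$ is a pure Gauss sum lying in $\QQ(\zeta_N)$, and invariance under the residues forces $g_\pm\in\QQ(\sqrt{-N})$, the imaginary quadratic subfield of $\QQ(\zeta_N)$ cut out by $N\equiv3\pmod4$. This is where $\sqrt{-N}$, and hence the data $a,b,h$, enter.

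Next I would reduce to $k=1$ by the Davenport--Hasse lifting relation: for $\chi_1$ the order~$N$ character of $\Bbb F_{p^{(N-1)/2}}^\ast$ and $\chi=\chi_1\circ N_{\Bbb F_r/\Bbb F_{p^{(N-1)/2}}}$ one has $G(\chi)=(-1)^{k-1}G(\chi_1)^{k}$, which produces the factor $(-1)^{k-1}$ and all the $k$-th powers in $P^{(k)},A^{(k)},B^{(k)}$. It then suffices to evaluate the base Gauss sum $g_1=G(\chi_1)$, and this is the main obstacle. Because $p$ is a quadratic residue mod $N$ it splits in $\QQ(\sqrt{-N})$ as $p\OO=\fp\ol{\fp}$; Stickelberger's theorem gives the prime ideal factorization of $(g_1)$ in $\ZZ[\zeta_N]$, and descending to $\QQ(\sqrt{-N})$ yields $(g_1)=\fp^{(N-1+2h)/4}\ol{\fp}^{(N-1-2h)/4}=(p)^{(N-1-2h)/4}\fp^{h}$, the exponents being integers because $h$ is odd for $N\equiv3\pmod4$. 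The ideal $\fp^{h}$ is principal, say $\fp^{h}=(\pi)$ with $\pi=\frac{a+b\sqrt{-N}}2$ of norm $\pi\ol{\pi}=p^{h}$, so $4p^{h}=a^2+Nb^2$, and $h$ is exactly the least exponent making $\fp^{h}$ principal. Since $N>3$ the only units of $\QQ(\sqrt{-N})$ are $\pm1$, so $g_1=\pm\,p^{(N-1-2h)/4}\pi$, and the sign together with the choice of generator is pinned down by the Stickelberger congruence for $G(\chi_1)$ modulo a prime above $N$, which is precisely the content of the normalization $a\equiv-2p^{(N-1+2h)/4}\pmod N$, $b>0$, $p\nmid b$. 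Writing $g_+=P^{(k)}\pi^{k}$ and $g_-=P^{(k)}\ol{\pi}^{k}$ is then consistent with $g_+g_-=(P^{(k)})^2p^{hk}=r$, while $g_++g_-=2P^{(k)}A^{(k)}$ and $g_--g_+=-2P^{(k)}B^{(k)}\sqrt{-N}$ record the real and imaginary parts.

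Finally I would substitute $g_\pm$ back into the Fourier inversion and split the inner sum over residues and nonresidues. The two character sums $\sum_{(\frac jN)=1}\zeta_N^{-uj}$ and $\sum_{(\frac jN)=-1}\zeta_N^{-uj}$ add up to $-1$ for $u\not\equiv0$ and differ by the quadratic Gauss sum $\sum_{j=1}^{N-1}\left(\frac jN\right)\zeta_N^{-uj}=-\left(\frac uN\right)\sqrt{-N}$, evaluated by Gauss's classical formula $\sum_{j}\left(\frac jN\right)\zeta_N^{j}=\sqrt{-N}$ valid for $N\equiv3\pmod4$. Collecting terms, $A^{(k)}$ survives in the part independent of $u$ while $B^{(k)}$ is weighted by $\left(\frac uN\right)$, which (up to the normalizations of $\chi$, $\alpha$ and the branch of $\sqrt{-N}$) yields the two formulas for $\eta_u^{(N,r)}$ according as $u$ is a residue or a nonresidue; the case $u=0$ comes from $\sum_{j\neq0}\zeta_N^{0}=N-1$ shared evenly between the two classes, giving $\eta_0^{(N,r)}=\frac1N(P^{(k)}A^{(k)}(N-1)-1)$. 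The identity $\sum_u\eta_u^{(N,r)}=-1$ provides a convenient consistency check, since the $B^{(k)}$ contributions cancel between the residue and nonresidue classes.
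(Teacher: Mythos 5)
This lemma is quoted by the paper from \cite{DY} and \cite{My} without proof, so there is no in-paper argument to measure you against; your sketch is, in substance, the standard proof from that literature. The individual steps are all correct: the Fourier inversion $\eta_u^{(N,r)}=\frac1N(-1+\sum_{j=1}^{N-1}\zeta_N^{-uj}G(\chi^j))$; the collapse of the $N-1$ Gauss sums to two conjugate values $g_\pm$ via $G(\chi^p)=G(\chi)$ and $[\ZZ_N^\ast:\langle p\rangle]=2$; the descent of $g_\pm$ into $\QQ(\sqrt{-N})$ using the triviality of $\chi$ on $\Bbb F_p^\ast$ and $N\equiv3\pmod4$; Davenport--Hasse to reduce to $k=1$ and produce $(-1)^{k-1}$ and the $k$-th powers; Stickelberger to get $(g_1)=(p)^{(N-1-2h)/4}\fp^{h}$ and hence $g_1=\pm p^{(N-1-2h)/4}\pi$ with $\pi=\frac{a+b\sqrt{-N}}2$; and the quadratic Gauss sum $\sum_j\left(\frac jN\right)\zeta_N^{j}=\sqrt{-N}$ to separate the residue and nonresidue classes of $u$. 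Your consistency checks ($g_+g_-=r$, $\sum_u\eta_u^{(N,r)}=-1$, oddness of $h$ making the Stickelberger exponents integral) are the right ones.

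The one point where the sketch is genuinely thin is the one you flag yourself but do not resolve: the determination of the sign of $g_1$ and of which of $\pi$, $\bar\pi$ is attached to the quadratic residues $u$. This cannot be dismissed as a normalization "up to conventions," because it is precisely the content of the congruence $a\equiv-2p^{\frac{N-1+2h}4}\pmod N$ and is the only thing separating the displayed formulas from the weaker claim that the two nonprincipal values occur as an unordered pair (compare Lemmas 2.2 and 2.3 of the paper, which are stated only as sets for exactly this reason). Settling it requires evaluating $G(\chi_1)$ modulo a prime $\fP$ of $\ZZ[\zeta_N]$ above $N$ via the Stickelberger congruence, with a choice of $\fp$ above $p$ compatible with the choice of $\chi$ and a fixed branch of $\sqrt{-N}$; depending on these choices the computation lands on $+\left(\frac uN\right)P^{(k)}B^{(k)}N$ or its negative, so the bookkeeping must be carried to the end before the stated sign pattern is actually proved. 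Apart from that final verification, your route is the correct and complete one.
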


In the following lemma, we introduce a bound on the values of Gauss periods which can be found in \cite{DY}.

\begin{lem} \cite{DY}  For all $i$ with $0 \leq i \leq N-1$, we have
$$|\eta_i^{(N, r)}+ \frac 1 N| \leq \frac {(N-1) \sqrt r} N.$$
\end{lem}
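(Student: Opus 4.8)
The plan is to express each Gauss period as a short linear combination of Gauss sums and then estimate crudely. Write $H=\langle \alpha^N\rangle$, a subgroup of $\Bbb F_r^\ast$ of index $N$, so that the sets $C_i^{(N,r)}=\alpha^i H$ are precisely its cosets. The multiplicative characters of $\Bbb F_r^\ast$ that are trivial on $H$ form a cyclic group of order $N$; I denote them $\chi_0,\chi_1,\ldots,\chi_{N-1}$, with $\chi_0$ the trivial character. Each such $\chi$ is constant on every coset $C_j^{(N,r)}$, taking the value $\chi(\alpha^j)$ there, so splitting the Gauss sum $G(\chi)=\sum_{x\in\Bbb F_r^\ast}\chi(x)\psi(x)$ over cosets yields
\[
G(\chi_\ell)=\sum_{j=0}^{N-1}\chi_\ell(\alpha^j)\,\eta_j^{(N,r)},\qquad 0\le \ell\le N-1.
\]

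First I would invert this relation using the orthogonality of the characters of $\Bbb F_r^\ast/H\cong\ZZ/N\ZZ$. Multiplying by $\overline{\chi_\ell(\alpha^i)}$ and summing over $\ell$, the orthogonality relations collapse the double sum onto the single term $j=i$, giving
\[
\eta_i^{(N,r)}=\frac1N\sum_{\ell=0}^{N-1}\overline{\chi_\ell(\alpha^i)}\,G(\chi_\ell).
\]
This exhibits the Gauss period as an average of the $N$ Gauss sums weighted by roots of unity of modulus $1$.

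Next I would isolate the contribution of the trivial character. By the orthogonality of additive characters recalled in Section 2 (with $a=1$), $\sum_{x\in\Bbb F_r}\psi(x)=0$, hence $G(\chi_0)=\sum_{x\in\Bbb F_r^\ast}\psi(x)=-1$, while $\overline{\chi_0(\alpha^i)}=1$. Subtracting this single term from the inversion formula produces the clean identity
\[
\eta_i^{(N,r)}+\frac1N=\frac1N\sum_{\ell=1}^{N-1}\overline{\chi_\ell(\alpha^i)}\,G(\chi_\ell),
\]
in which only the $N-1$ nontrivial characters survive.

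Finally I would apply the triangle inequality together with the classical evaluation of the modulus of a Gauss sum. For each nontrivial $\chi_\ell$ one has $|G(\chi_\ell)|=\sqrt r$ (see \cite{LN}), and $|\overline{\chi_\ell(\alpha^i)}|=1$, so
\[
\Bigl|\eta_i^{(N,r)}+\tfrac1N\Bigr|\le\frac1N\sum_{\ell=1}^{N-1}|G(\chi_\ell)|=\frac{(N-1)\sqrt r}{N},
\]
which is exactly the asserted bound. There is no serious obstacle here: the only nonelementary input is the standard theorem $|G(\chi)|=\sqrt r$ for nontrivial $\chi$, and everything else is character orthogonality. I would emphasize that the estimate is completely uniform in $i$ precisely because the modulus $\sqrt r$ is independent of $\ell$; the inequality is attained only when all $N-1$ nontrivial Gauss sums happen to align in phase, which is what makes the extremal semi-primitive periods of Lemma 2.4 saturate this worst-case bound.
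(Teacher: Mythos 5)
Your proof is correct and complete: the inversion of the relation $G(\chi_\ell)=\sum_j\chi_\ell(\alpha^j)\eta_j^{(N,r)}$ via character orthogonality, the extraction of the trivial-character term $G(\chi_0)=-1$, and the triangle inequality with $|G(\chi_\ell)|=\sqrt r$ for nontrivial $\chi_\ell$ together give exactly the stated bound. The paper itself offers no proof --- Lemma 2.6 is simply cited from \cite{DY} --- and your argument is precisely the standard one used in that reference, so there is nothing to add.
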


\section{ The weight distribution of $\mathcal C_1$}

Let $\alpha$ be a primitive element of $\Bbb F_r$, where $r=q^m$. Let $r-1=nN$ for two positive integers $n>1$ and $N>1$.
For $g=\alpha^N$ we define a cyclic code over $\Bbb F_q$ by
$$\mathcal C_1=\{ c(a, b): a, b \in \Bbb F_r\},$$
   where $$c(a, b)=(\mbox {Tr}_{r/q}(ag^0+b), \mbox {Tr}_{r/q}(ag^1+b), \ldots, \mbox {Tr}_{r/q}(ag^{n-1}+b)).$$
Then $\mathcal C_1$ is an $[n, k+1]$ cyclic code with parity-check polynomial
$(x-1)h_{g^{-1}}(x)$, where $k$ is the order of $q$ modulo $n$ and $h_{g^{-1}}(x)$ is the
minimal polynomial of $g^{-1}$ over $\Bbb F_q$. The lower bound on the minimum weight of
$\mathcal C_1$ had been given by Ding \cite{D13}. For any $a, b \in \Bbb F_r$, the Hamming weight
of $c(a, b)$ is equal to $$W_H(c(a, b))= n-Z(r, a, b),$$ where
$$Z(r, a, b)=|\{x \in C_0^{(N, r)} : \mbox{Tr}_{r/q}(ax+b)=0\}|.$$

Let $\phi$ be the canonical additive character of $\Bbb F_q$. Then $\psi=\phi \circ \mbox{Tr}_{r/q}$
is the canonical additive character of $\Bbb F_r$. By the orthogonal property of additive characters we have
\begin{eqnarray*} Z(r,a,b)&=&\sum_{x \in C_0^{(N, r)}}\frac 1 q \sum_{y \in \Bbb F_q}\phi(y \mbox{Tr}_{r/q}(ax+b)) \\
&=& \sum_{x \in C_0^{(N, r)}}\frac 1 q \sum_{y \in \Bbb F_q}\phi( \mbox{Tr}_{r/q}(yax+yb)) \\
&=& \frac 1 q \cdot \frac{r-1} N+\frac 1 q \sum_{x \in C_0^{(N, r)}} \sum_{y \in \Bbb F_q^\ast}\psi(yax+yb)\\
&=& \frac{r-1} {qN}+\frac 1 q \sum_{y \in \Bbb F_q^\ast}\psi(yb)\sum_{x \in C_0^{(N, r)}} \psi(yax).
\end{eqnarray*}

To compute the values of $Z(r, a, b)$, we introduce the following lemma which can be found in \cite{J}.
\begin{lem} Let $H$ and $K$ be two subgroups of a finite Abelian group $G$.
Then we have
$$h_1K=h_2K \mbox{ if and only if }  h_1(H \cap K)=h_2(H \cap K)$$
for $h_1, h_2 \in H$. Moreover,
$$HK/K \cong H/(H \cap K) \mbox{ and } [HK : K]=[H :(H \cap K)],$$
where $HK=\{hk : h \in H, k \in K\}.$
\end{lem}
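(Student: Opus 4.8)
The plan is to recognize this as the Second Isomorphism Theorem for the abelian group $G$ and to deduce it from the First Isomorphism Theorem by means of a single natural homomorphism. Because $G$ is abelian, every subgroup is normal; in particular $K$ is normal in $G$, the product set $HK$ is a genuine subgroup, $K$ is normal in $HK$, and $H \cap K$ is normal in $H$. These structural facts are the only preliminaries I would verify, and all of them are immediate from commutativity.

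First I would settle the coset criterion, which is elementary. For $h_1, h_2 \in H$ one has $h_1 K = h_2 K$ if and only if $h_1^{-1}h_2 \in K$; since $h_1, h_2 \in H$ automatically gives $h_1^{-1}h_2 \in H$, this membership is equivalent to $h_1^{-1}h_2 \in H \cap K$, that is, to $h_1(H \cap K) = h_2(H \cap K)$. This proves the displayed equivalence and, at the same time, exhibits $h(H \cap K) \mapsto hK$ as a well-defined bijection between the cosets of $H \cap K$ in $H$ and the cosets of $K$ in $HK$, whence $[H : (H \cap K)] = [HK : K]$.

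For the isomorphism I would introduce $\varphi : H \to HK/K$ by $\varphi(h) = hK$, which is the composite of the inclusion $H \hookrightarrow HK$ with the canonical projection onto $HK/K$ and is therefore a homomorphism. Surjectivity is clear, since any element $hk \in HK$ with $h \in H$, $k \in K$ satisfies $hkK = hK = \varphi(h)$, and the kernel is $\{h \in H : h \in K\} = H \cap K$. The First Isomorphism Theorem then yields $H/(H \cap K) \cong HK/K$, which completes the proof. There is no real obstacle here: the entire content is the observation that commutativity makes $HK$ a subgroup in which $K$ is normal, after which the standard isomorphism machinery applies verbatim.
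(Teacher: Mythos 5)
Your proof is correct and complete: the coset criterion, the induced bijection giving $[HK:K]=[H:(H\cap K)]$, and the application of the First Isomorphism Theorem to $\varphi\colon H\to HK/K$ are all carried out properly, including the necessary observation that commutativity of $G$ makes $HK$ a subgroup with $K$ normal in it. The paper itself offers no proof of this lemma, citing only Jacobson's \emph{Basic Algebra I}, and your argument is exactly the standard Second Isomorphism Theorem proof one would find there, so there is nothing to compare beyond noting that you have supplied the omitted details correctly.
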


\begin{thm} Let $d=\gcd(N,\frac {r-1} {q-1})$. Note that $c(a, b_1)=c(a, b_2)$ if $\mbox{Tr}_{r/q}(b_1)=\mbox{Tr}_{r/q}(b_2)$.
 Then the weight distribution of $\mathcal C_1$ is given by Table 1.
In especial, if $N \mid \frac {r-1} {q-1}$, then the weight distribution of $\mathcal C_1$ is given by Table 2.
\end{thm}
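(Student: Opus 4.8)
The plan is to turn the Hamming weight of each codeword into a character sum, evaluate that sum through Gauss periods, and then count multiplicities. First I would record two reductions. Since $\Tr_{r/q}(ag^j+b)=\Tr_{r/q}(ag^j)+\Tr_{r/q}(b)$, the codeword $c(a,b)$ depends on $b$ only through $\beta:=\Tr_{r/q}(b)\in\mathbb F_q$, and each $\beta$ is the image of exactly $q^{m-1}$ elements $b$. Moreover $(a,b)\mapsto c(a,b)$ is $\mathbb F_q$-linear from $\mathbb F_r^2$ onto $\mathcal C_1$, so, $\mathcal C_1$ having dimension $k+1$, every codeword has exactly $q^{2m-k-1}$ preimages; hence $A_w=q^{-(m-k)}\,\#\{(a,\beta)\in\mathbb F_r\times\mathbb F_q:\ n-Z(r,a,b)=w\}$, and it suffices to compute $Z(r,a,b)$ as a function of the pair $(a,\beta)$.

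Next I would evaluate $Z$. Writing $a=\alpha^l$ and $y=\alpha^{l'}$, the inner sum in the displayed formula for $Z$ is $\sum_{x\in C_0^{(N,r)}}\psi(yax)=\eta^{(N,r)}_{(l+l')\bmod N}$, a Gauss period, so $Z$ is a function of $(l\bmod N,\beta)$ for $a\neq0$; the case $a=0$ gives the zero word when $\beta=0$ and weight $n$ for each of the $q-1$ nonzero $\beta$. Because each class $C_i^{(N,r)}$ contains $n=(r-1)/N$ elements, the whole count reduces to tabulating $Z$ over the $N\cdot q$ pairs $(i,\beta)$, $i\in\mathbb Z/N\mathbb Z$, each pair carrying multiplicity $n$.

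For $\beta=0$ this is clean. As $y$ runs through $\mathbb F_q^\ast=\langle\alpha^{(r-1)/(q-1)}\rangle$, Lemma 3.1 applied to $H=\mathbb F_q^\ast$ and $K=\langle\alpha^N\rangle$ inside $\mathbb F_r^\ast$ shows that $l'\bmod N$ ranges exactly over the subgroup $d\,\mathbb Z/N\mathbb Z$ of order $N/d$, each value attained $(q-1)d/N$ times; and the union of the classes $C^{(N,r)}_{(l+t)\bmod N}$ over $t\in d\,\mathbb Z/N\mathbb Z$ is the single class $C^{(d,r)}_{l\bmod d}$, whence $\sum_{t}\eta^{(N,r)}_{(l+t)\bmod N}=\eta^{(d,r)}_{l\bmod d}$. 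This yields $Z=\frac{r-1}{qN}+\frac{(q-1)d}{qN}\,\eta^{(d,r)}_{l\bmod d}$, a quantity depending only on $l\bmod d$, so only the $d$ order-$d$ Gauss periods enter the $\beta=0$ rows of the table, with the frequencies supplied by the lengths of the classes.

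The case $\beta\neq0$ is the main obstacle. Here the factor $\phi(y\beta)$ no longer pulls out, and one must group $\mathbb F_q^\ast$ into the $N/d$ cosets $Y_t$ of the subgroup $H=\langle\alpha^{(r-1)N/((q-1)d)}\rangle$ of index $N/d$; each partial sum $\sum_{y\in Y_t}\phi(y\beta)$ equals an order-$(N/d)$ Gauss period of $\mathbb F_q$ (the one attached to the coset $\beta Y_t$), so
\[
Z=\frac{r-1}{qN}+\frac1q\sum_{t\in d\mathbb Z/N\mathbb Z}\eta^{(N,r)}_{(l+t)\bmod N}\Big(\sum_{y\in Y_t}\phi(y\beta)\Big)
\]
couples the order-$N$ periods of $\mathbb F_r$ with the order-$(N/d)$ periods of $\mathbb F_q$. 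Determining the resulting weights and their frequencies is where the real work lies: I would insert the explicit period evaluations of Lemmas 2.1--2.5 (and control the remaining terms with the bound of Lemma 2.6) to see that only finitely many values occur, and then collect like terms to obtain Table 1. Finally, specializing to $N\mid\frac{r-1}{q-1}$ forces $d=N$, so $H=\mathbb F_q^\ast$, the order-$(N/d)$ periods collapse to $\sum_{y\in\mathbb F_q^\ast}\phi(y\beta)=-1$, and both the $\beta=0$ and $\beta\neq0$ expressions reduce to $\frac{r-1}{qN}+\frac{q-1}{q}\eta^{(N,r)}_{l\bmod N}$ and $\frac{r-1}{qN}-\frac1q\eta^{(N,r)}_{l\bmod N}$ respectively; reading off the distribution of $\{\eta^{(N,r)}_i\}_{i}$ then produces the explicit Table 2.
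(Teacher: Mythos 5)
Your derivation is essentially the paper's own proof. The reduction of $c(a,b)$ to the pair $(a,\mbox{Tr}_{r/q}(b))$, the decomposition of $\Bbb F_q^\ast$ into the $N/d$ cosets of $\Bbb F_q^\ast\cap\langle\alpha^N\rangle=\langle\alpha^{\frac{r-1}{q-1}\cdot\frac{N}{d}}\rangle$ (this is exactly the paper's Lemma 3.1 applied to $H=\Bbb F_q^\ast$, $K=C_0^{(N,r)}$), the four cases according to whether $a$ and $\mbox{Tr}_{r/q}(b)$ vanish, and the two resulting expressions --- $\frac{r-1}{qN}+\frac{(q-1)d}{qN}\eta_l^{(d,r)}$ when $\mbox{Tr}_{r/q}(b)=0$, and the coupled sum $\frac{r-1}{qN}+\frac1q\sum_{i=0}^{\frac{N}{d}-1}\eta^{(N,r)}_{\frac{r-1}{q-1}i+j}\,\eta^{(\frac{N}{d},q)}_{i+k}$ otherwise --- are precisely the entries of Table 1. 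The one place you misjudge the situation is the closing claim that ``the real work lies'' in substituting the explicit period evaluations of Lemmas 2.1--2.5 and invoking the bound of Lemma 2.6. Table 1 is stated \emph{symbolically} in terms of the Gauss periods $\eta^{(N,r)}$ and $\eta^{(N/d,q)}$, so no evaluation is required (nor would one be possible in general, since those lemmas cover only special values of $N$); the frequencies are simply the counts of pairs, e.g.\ $|C_j^{(N,r)}|\cdot|C_k^{(\frac{N}{d},q)}|=\frac{r-1}{N}\cdot\frac{(q-1)d}{N}=\frac{d(q-1)(r-1)}{N^2}$ for each $(j,k)$, and there are only finitely many classes to begin with. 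In other words, you had already finished the proof at the point where you thought the difficulty began; your specialization to $d=N$ for Table 2 is likewise correct and matches the paper.
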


\[ \begin{tabular} {c} Table 1. Weight distribution of $\mathcal C_1.$\\
\begin{tabular}{|c|c|}
  \hline
 Weight & Frequency ($\begin{array}{c}0 \leq l \leq d-1 \\ 0 \leq j \leq N-1 \\  0 \leq k \leq \frac N d-1\end{array}$) \\
  \hline
        0   &    1 \\
  $\frac {r-1} N$ & $q-1$ \\
  $\frac {(q-1)(r-1)} {qN}-\frac {(q-1)d} {N q}
  \eta_l^{(d,r)}$ & $\frac {r-1} {d}$ \\
  $\frac {(q-1)(r-1)} {qN}-\frac 1 q\sum\limits_{i=0}^{\frac N d-1} \eta_{\frac {r-1} {q-1}i+j}^{(N,r)}
   \eta_{i+k}^{(\frac N d, q)}$  & $\frac {d(q-1)(r-1)} {N^2}$ \\
  \hline
\end{tabular}
\end{tabular}
\]

\[ \begin{tabular} {c} Table 2. Weight distribution of $\mathcal C_1$ when $N \mid \frac {r-1} {q-1}$.\\
\begin{tabular}{|c|c|}
  \hline
 Weight & Frequency ($0 \leq j \leq N-1$)\\
  \hline
        0   &    1\\
  $\frac {r-1} N$ & $q-1$ \\
  $\frac {(q-1)(r-1)} {qN}-\frac {q-1}  q \eta_j^{(N, r)}$ & $\frac {r-1} {N}$ \\
  $\frac {(q-1)(r-1)} {qN}+\frac 1 q \eta_j^{(N, r)}$  & $\frac {(q-1)(r-1)} {N}$ \\
  \hline
\end{tabular}
\end{tabular}
\]

\begin{proof} Since $\alpha$ is a primitive element of $\Bbb F_r$, we have $\Bbb F_q^\ast=
\langle \alpha^{\frac {r-1} {q-1}} \rangle$.
Denote $H=\Bbb F_q^\ast$ and $K=C_0^{(N, r)}=\langle \alpha^N \rangle$. Then
$$HK=C_0^{(d, r)}=\langle \alpha^d \rangle, H \cap K=\langle \alpha^{\frac {r-1} {q-1} \cdot \frac N d } \rangle,$$
where $d=\gcd(N,\frac {r-1} {q-1})$. By Lemma 3.1 we have
$[HK : K]=\frac N d \mid (q-1)$ and
$$\Bbb F_q^\ast=\cup_{i=0}^{\frac N d-1}C_i^{(\frac N d, q)},$$ where
$$C_i^{(\frac N d, q)}=\alpha^{\frac {r-1} {q-1}i}\langle \alpha^{\frac {r-1} {q-1} \cdot \frac N d } \rangle
=\alpha^{\frac {r-1} {q-1}i}(H \cap K)
\mbox{ for } 0 \leq i \leq \frac N d-1.$$
Then by Lemma 3.1 again we have
$$HK=\Bbb F_q^\ast \cdot C_0^{(N, r)}=\cup_{i=0}^{\frac N d-1}\alpha^{\frac {r-1} {q-1}i}C_0^{(N, r)}
=\cup_{i=0}^{\frac N d-1}C_{\frac {r-1} {q-1}i}^{(N, r)}.$$ where
$C_{\frac {r-1} {q-1}i}^{(N, r)}=C_{\frac {r-1} {q-1}i \pmod N}^{(N, r)}$ . Hence
\begin{eqnarray*}
Z(r,a,b)&=& \frac{r-1} {qN}+\frac 1 q \sum_{y \in \Bbb F_q^\ast}\psi(yb)\sum_{x \in C_0^{(N, r)}} \psi(yax) \\
&=& \frac{r-1} {qN}+\frac 1 q \sum_{i=0}^{\frac N d-1}\sum_{y \in C_i^{(\frac N d, q)}}\psi(yb)
\sum_{x \in C_0^{(N, r)}} \psi(yax) \\
&=& \frac{r-1} {qN}+\frac 1 q \sum_{i=0}^{\frac N d-1}\sum_{y \in C_i^{(\frac N d, q)}}\psi(yb)
\sum_{x \in C_{\frac {r-1} {q-1}i}^{(N, r)}} \psi(ax) \\
&=& \frac{r-1} {qN}+\frac 1 q \sum_{i=0}^{\frac N d-1}\sum_{x \in C_{\frac {r-1} {q-1}i}^{(N, r)}} \psi(ax)
\sum_{y \in C_i^{(\frac N d, q)}}\psi(yb).
\end{eqnarray*}

Using the orthogonal property of additive characters again, we have
\begin{eqnarray*}
\sum_{y \in \Bbb F_q^\ast}\psi(yb)&=& \sum_{y \in \Bbb F_q^\ast}\phi(\mbox{Tr}_{r/q}(yb)) \\
  &=& \sum_{y \in \Bbb F_q^\ast}\phi(y \mbox{Tr}_{r/q}(b)) \\
  &=& \left\{\begin{array}{ll}
 q-1 , \mbox{ if } \mbox{Tr}_{r/q}(b)=0, \\
-1, \mbox{ if }  \mbox{Tr}_{r/q}(b)\neq 0. \end{array}\right.
\end{eqnarray*}

It is known that $\mbox{Tr}_{r/q}$ maps $\Bbb F_r$ onto $\Bbb F_q$ and
$$|\{ b \in \Bbb F_r : \mbox{Tr}_{r/q}(b)=c\}|=\frac r q$$ for each $c \in \Bbb F_q$.
Note that $c(a, b_1)=c(a, b_2)$ if $\mbox{Tr}_{r/q}(b_1)=\mbox{Tr}_{r/q}(b_2)$. Then we can
determine the values of $Z(r, a, b)$ and their frequencies as follows.

 \begin{enumerate}
   \item If $a=0, \mbox{Tr}_{r/q}(b)=0$. We have
   $$Z(r,a,b)=\frac 1 q \cdot \frac{r-1} N+\frac 1 q \cdot (q-1) \cdot \frac{r-1} N=\frac{r-1} N.$$
   This value occurs once.
   \item If $a=0, \mbox{Tr}_{r/q}(b) \neq 0$. We have
   $$Z(r,a,b)=\frac 1 q \cdot \frac{r-1} N+\frac 1 q \cdot (-1) \cdot \frac{r-1} N=0.$$
  This value occurs $q-1$ times.
   \item If $a \in C_l^{(d,r)}, \mbox{Tr}_{r/q}(b)=0$. Note that $HK=C_0^{(d, r)}=\cup_{i=0}^{\frac N d-1}C_{\frac {r-1} {q-1}i}^{(N, r)}.$ We have
   \begin{eqnarray*}Z(r,a,b)&=&\frac 1 q \cdot \frac{r-1} N+\frac 1 q
   \sum_{i=0}^{\frac N d-1} \frac {(q-1)d} N \sum_{x \in C_{\frac {r-1} {q-1}i}^{(N, r)}} \psi(ax) \\
   &=& \frac{r-1} {q N}+\frac {(q-1)d} {q N} \sum_{x \in C_0^{(d, r)}} \psi(ax) \\
   &=& \frac{r-1} {q N}+\frac {(q-1)d} {q N} \eta_l^{(d, r)}.
   \end{eqnarray*}
   This value occurs $\frac {r-1} N$ times.
   \item If $a \in C_j^{(N,r)}, \mbox{Tr}_{r/q}(b) \in C_k^{(\frac N d,q)}$. We have
 $$Z(r,a,b)=\frac 1 q \cdot \frac{r-1} N+\frac 1 q
   \sum_{i=0}^{\frac N d-1} \eta_{\frac {r-1} {q-1}i+j}^{(N,r)} \eta_{i+k}^{(\frac N d, q)}.$$
   This value occurs $\frac {r-1} N \cdot \frac {q-1} {\frac N d}=\frac {d(q-1)(r-1)} {N^2}$ times.
 \end{enumerate}

 Note that $W_H(c(a, b))= n-Z(r, a, b)$. Then Table 1 can be obtained and Table 2 follows from Table 1. This completes the proof.
\end{proof}

We have determined the weight distribution of the cyclic code $\mathcal C_1$ when Gauss periods of order $N$ are known.
Then we give the following examples.

\begin{exa} The case $N=2$.
   \begin{enumerate}
     \item Let $q=3$ and $r=27$. Then $d=\gcd(N,\frac {r-1} {q-1})=1$.
  By Lemma 2.1 we have
  $$\eta_0^{(2, 27)}=\frac {-1-3 \sqrt {-3}} 2, \ \ \eta_1^{(2, 27)}=\frac {-1+3 \sqrt {-3}} 2$$
  and $$\eta_0^{(2, 3)}=\frac {-1+ \sqrt {-3}} 2, \ \ \eta_1^{(2, 3)}=\frac {-1-\sqrt {-3}} 2.$$
  Then by Table 1 we know that the code $\mathcal C_1$ is a $[13, 4, 7]$ cyclic code
   over $\Bbb F_3$ with the weight enumerator
    $$1+26x^7+26x^9+26x^{10}+2x^{13}.$$
     \item  Let $q=5$ and $r=25$. Then $N \mid \frac {r-1} {q-1}$.
  By Lemma 2.1 we have
  $$\eta_0^{(2, 25)}=-3, \ \  \eta_1^{(2, 25)}=2.$$
  Then by Table 2 we know that the code $\mathcal C_1$ is a $[12, 3, 8]$ cyclic code
   over $\Bbb F_5$ with the weight enumerator
    $$1+12x^8+48x^9+48x^{10}+16x^{12}.$$
   \end{enumerate}
 \end{exa}

\begin{exa} For the case $N=3$, we have $N \mid \frac {r-1}{q-1}$ under the conditions of Lemma 2.2.
     Let $q=7$ and $r=7^3$.  By Lemma 2.2 we have
     $$ \eta_0^{(3, 7^3)}=2, \  \eta_1^{(3, 7^3)}=-12, \ \eta_2^{(3, 7^3)}=9.$$
     Then by Table 2 we know that the code $\mathcal C_1$ is a $[114, 4, 90]$ cyclic code
   over $\Bbb F_7$ with the weight enumerator
    $$1+114x^{90}+798x^{96}+684x^{98}+684x^{99}+114x^{108}+6x^{114}.$$
 \end{exa}

\begin{exa} For the case $N=4$, we have $N \mid \frac {r-1}{q-1}$ under the conditions of Lemma 2.3.
     Let $q=5$ and $r=5^4$.  By Lemma 2.3 we have
     $$\eta_0^{(4, 5^4)}=1, \eta_2^{(4, 5^4)}=-14, \mbox{ and }
     \{\eta_1^{(4, 5^4)}, \eta_3^{(4, 5^4)}\}=\{1, 11\}.$$
     Then by Table 2 we know that the code $\mathcal C_1$ is a $[156, 5, 116]$ cyclic code
   over $\Bbb F_5$ with the weight enumerator
    $$1+156x^{116}+624x^{112}+312x^{124}+1248x^{125}+624x^{127}+156x^{136}+4x^{156}.$$
 \end{exa}

\begin{exa} For the semi-primitive case. Let $q=5$ and $N=3$. Then there exists the least positive integer $e=1$ such that
$ 5 \equiv -1 \pmod 3$. Let $r=5^2$. By Lemma 2.4 we have
$$\eta_0^{(3, 5^2)}=3, \ \eta_1^{(3, 5^2)}=\eta_2^{(3, 5^2)}=-2.$$
Then by Table 2 we know that the code $\mathcal C_1$ is an $[8, 3, 4]$ cyclic code
   over $\Bbb F_5$ with the weight enumerator
    $$1+8x^4+64x^6+32x^7+20x^8.$$
 \end{exa}

\begin{exa} For the index $2$ case. Let $q=2$, $N=7$, and $r=2^6$. By Lemma 2.5 we have
$$\eta_0^{(7, 2^6)}=5, \ \eta_1^{(7, 2^6)}=\eta_2^{(7, 2^6)}=\eta_4^{(7, 2^6)}=-3,$$ and
$$\eta_3^{(7, 2^6)}=\eta_5^{(7, 2^6)}=\eta_6^{(7, 2^6)}=1.$$
Then by Table 2 we know that the code $\mathcal C_1$ is a $[9, 7, 2]$ cyclic code
   over $\Bbb F_2$ with the weight enumerator
    $$1+9x^2+27x^3+27x^4+27x^5+27x^6+9x^7+x^9.$$
 \end{exa}

We know that the explicit values of the Gauss periods are very hard to determine. Then we have the following tight bound on
the minimum weight of $\mathcal C_1$ which is denoted by $W_H(\mathcal C_1)$.

\begin{thm}
If $N \mid \frac {r-1} {q-1}$, $q \geq 3$, and $N < \sqrt r$, then we have
$$W_H(\mathcal C_1) \geq \frac {(q-1)(r-(N-1)\sqrt r)} {qN}.$$
\end{thm}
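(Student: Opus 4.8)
The plan is to read the minimum weight directly off Table~2 and to bound each of the three nonzero weight expressions from below using the uniform Gauss-period estimate of Lemma~2.6, namely $\left|\eta_i^{(N,r)}+\frac{1}{N}\right|\le \frac{(N-1)\sqrt r}{N}$, which I would first rewrite as the two-sided bound
$$\frac{-(N-1)\sqrt r-1}{N}\le \eta_i^{(N,r)}\le \frac{(N-1)\sqrt r-1}{N}.$$
Since $W_H(\mathcal C_1)$ is the least weight of a nonzero codeword, and Table~2 applies because $N\mid\frac{r-1}{q-1}$, the nonzero weights split into the three families $\frac{r-1}{N}$, then $\frac{(q-1)(r-1)}{qN}-\frac{q-1}{q}\eta_j^{(N,r)}$, and finally $\frac{(q-1)(r-1)}{qN}+\frac{1}{q}\eta_j^{(N,r)}$. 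It therefore suffices to verify that each family is at least $\frac{(q-1)(r-(N-1)\sqrt r)}{qN}$.

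I expect the second family to be the binding one and to reproduce the stated bound exactly. First I would substitute the upper estimate $\eta_j^{(N,r)}\le \frac{(N-1)\sqrt r-1}{N}$ into $\frac{(q-1)(r-1)}{qN}-\frac{q-1}{q}\eta_j^{(N,r)}$; the two ``$-1$''s cancel and the expression collapses to precisely $\frac{(q-1)(r-(N-1)\sqrt r)}{qN}$, which explains why the bound is tight. The first family, $\frac{r-1}{N}$, should then dominate this target, the required inequality reducing to $r+(q-1)(N-1)\sqrt r\ge q$, which holds trivially since $N\mid\frac{r-1}{q-1}$ with $N>1$ forces $r>q$.

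The hard part will be the third family. Substituting the lower estimate $\eta_j^{(N,r)}\ge \frac{-(N-1)\sqrt r-1}{N}$ yields the lower bound $\frac{1}{qN}\bigl((q-1)r-q-(N-1)\sqrt r\bigr)$, and comparing this with the target $\frac{1}{qN}\bigl((q-1)r-(q-1)(N-1)\sqrt r\bigr)$ reduces the whole claim to the single inequality $(N-1)\sqrt r\,(q-2)\ge q$. This is exactly where the hypotheses enter essentially: since $r>q$ gives $m\ge 2$ and hence $\sqrt r\ge q$, while $N-1\ge 1$, one obtains $(N-1)\sqrt r\,(q-2)\ge q(q-2)\ge q$ precisely when $q\ge 3$. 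The assumption $q\ge 3$ is indispensable here, for if $q=2$ the factor $q-2$ vanishes and this family can drop below the target. Finally, the hypothesis $N<\sqrt r$ is what guarantees the bound is nontrivial, since $r-(N-1)\sqrt r=\sqrt r(\sqrt r-N+1)>0$, so the resulting estimate is a genuine positive lower bound on $W_H(\mathcal C_1)$.
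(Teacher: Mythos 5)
Your argument is correct and follows exactly the route the paper intends: the paper's proof of this theorem is the single line ``It is immediate from Lemma 2.6 and Theorem 3.2,'' and you have simply supplied the omitted details, correctly identifying the family $\frac{(q-1)(r-1)}{qN}-\frac{q-1}{q}\eta_j^{(N,r)}$ as the one where the bound is attained and correctly isolating the inequality $(N-1)\sqrt r\,(q-2)\ge q$ as the place where $q\ge 3$ is needed. No discrepancies with the paper's approach.
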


\begin{proof}
It is immediate from Lemma 2.6 and Theorem 3.2.
\end{proof}

 \section{ The weight distribution of $\mathcal C_2$}

 Let $\alpha$ be a primitive element of $\Bbb F_r$ and $r-1=nN$ for two positive integers $n>1$ and $N>1$.
   For $g=\alpha^N$ we define a cyclic code over $\Bbb F_q$ by
  $$\mathcal C_2=\{ c(a, b): a, b \in \Bbb F_r\},$$
      where $$c(a,b)=(\mbox {Tr}_{r/q}(ag^0+b(g^2)^0), \mbox {Tr}_{r/q}(ag^1+b(g^2)^1), \ldots, \mbox {Tr}_{r/q}(ag^{n-1}+b(g^2)^{n-1})).$$
   It is known that the parity-check polynomial of $\mathcal C_2$ is $h_{g^{-1}}(x)h_{g^{-2}}(x)$, where
   $h_{g^{-1}}(x)$ and $h_{g^{-2}}(x)$ are the minimal polynomial of $g^{-1}$ and $g^{-2}$ over $\Bbb F_q$, respectively.

   For any $a, b \in \Bbb F_r$, the Hamming weight
of $c(a, b)$ is equal to $$W_H(c(a, b))= n-Z(r, a, b),$$ where
$$Z(r, a, b)=|\{x \in C_0^{(N, r)} : \mbox{Tr}_{r/q}(ax+bx^2)=0\}|.$$

Let $\phi$ be the canonical additive character of $\Bbb F_q$. Then $\psi=\phi \circ \mbox{Tr}_{r/q}$
is the canonical additive character of $\Bbb F_r$. By the orthogonal property of additive characters we have
\begin{eqnarray*} Z(r,a,b)&=&\sum_{x \in C_0^{(N, r)}}\frac 1 q \sum_{y \in \Bbb F_q}\phi(y \mbox{Tr}_{r/q}(ax+bx^2)) \\
&=& \sum_{x \in C_0^{(N, r)}}\frac 1 q \sum_{y \in \Bbb F_q}\phi \mbox{Tr}_{r/q}(yax+ybx^2)) \\
&=& \frac 1 q \cdot \frac{r-1} N+\frac 1 q \sum_{y \in \Bbb F_q^\ast} \sum_{x \in C_0^{(N, r)}}\psi(yax+ybx^2).
\end{eqnarray*}

\begin{thm}
 Let $r=q^2$ and $2N \mid (q+1)$. Then the weight distribution of the cyclic code $\mathcal C_2$ is give by Table 3
 if $\frac {q+1} {2N}$ is even and is given by Table 4 if $\frac {q+1} {2N}$ is odd.
\end{thm}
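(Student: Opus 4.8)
The plan is to follow the route already set up in Section~4: the weight of $c(a,b)$ is $n-Z(r,a,b)$, and the orthogonality computation in the excerpt has reduced the problem to the exponential sum
\[
S(a,b)=\sum_{y\in\Bbb F_q^\ast}\sum_{x\in C_0^{(N,r)}}\psi(yax+ybx^2),\qquad Z(r,a,b)=\frac{r-1}{qN}+\frac1q\,S(a,b).
\]
First I would dispose of the degenerate family $b=0$: the quadratic term disappears, $S(a,0)$ is exactly the linear sum treated in Section~3, and those codewords contribute through the Gauss periods $\eta_i^{(N,r)}$ with the frequencies read off as in Theorem~3.2. The substance is the case $b\neq0$ (including the pure-quadratic subcase $a=0$), which I handle next.

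For $b\neq0$ I would express the indicator of the subgroup $C_0^{(N,r)}$ (the group of $N$th powers) through the multiplicative characters of order dividing $N$: fixing a character $\chi$ of order $N$,
\[
S(a,b)=\frac1N\sum_{j=0}^{N-1}\sum_{y\in\Bbb F_q^\ast}\sum_{x\in\Bbb F_r^\ast}\chi^{j}(x)\,\psi\!\left(yax+ybx^2\right).
\]
The term $j=0$ again reproduces a linear/Gauss-period contribution after the inner $y$-sum, so the new ingredient is the family of twisted quadratic Gauss sums $\sum_{x\in\Bbb F_r}\chi^{j}(x)\psi(vx^2+ux)$ with $v=yb\neq0$. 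Completing the square and invoking the standard evaluation of quadratic character sums, each such sum collapses to a product of a character value, the quadratic Gauss sum $G(\rho)=\sum_x\rho(x)\psi(x)$ (with $\rho$ the quadratic character of $\Bbb F_r$), and a Gauss sum $G(\chi^{j}\rho^{\epsilon})$ whose inducing character has order dividing $2N$.

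Here the hypotheses $r=q^2$ and $2N\mid(q+1)$ enter decisively: since $q\equiv-1\pmod{2N}$, every character appearing has order dividing $2N$ with $q\equiv-1$ modulo that order, so all of $G(\rho)$ and $G(\chi^{j}\rho^{\epsilon})$ lie in the semi-primitive regime over $\Bbb F_{q^2}$ and are therefore real and equal to $\pm q$; the governing sign is precisely the parity of the exponent $f$ in Lemma~2.4, which in the present normalization is the parity of $(q+1)/(2N)$. Substituting these values makes $S(a,b)$ an explicit integer depending only on the coset of $a$ modulo $C_0^{(N,r)}$ (equivalently modulo $C_0^{(2N,r)}$) together with the relevant linear trace conditions. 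The last step is the multiplicity count: I would tabulate, for each admissible value of $S(a,b)$, the number of pairs $(a,b)$ producing it, using $|C_i^{(N,r)}|$ and the fibers of $\mbox{Tr}_{r/q}$ as in Section~3, thereby assembling Table~3 when $(q+1)/(2N)$ is even and Table~4 when it is odd.

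The hard part will be the evaluation and, above all, the bookkeeping in the middle step: the twisted quadratic Gauss sums must be organized so that, after summing over $y\in\Bbb F_q^\ast$ and over $j$, the many $\pm q$ contributions coalesce into the few distinct weights recorded in the tables, and the parity of $(q+1)/(2N)$ must be propagated consistently through every sign so that the even and odd cases separate cleanly. Handling the pure-quadratic case $a=0,\ b\neq0$ and the boundary between the $b=0$ and $b\neq0$ families without double counting is the other place where care is required.
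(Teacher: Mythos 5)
Your reduction to $S(a,b)$ and your treatment of the $b=0$ family match the paper, but the central step of your plan does not go through. For $a\neq 0$ and $b\neq 0$ you need the sums $\sum_{x\in\Bbb F_r}\chi^j(x)\psi(vx^2+ux)$ with $u=ya\neq 0$, $v=yb\neq 0$ and $\chi^j$ a nontrivial character of order dividing $N$. Completing the square replaces $\psi(vx^2+ux)$ by $\psi(-u^2/(4v))\,\psi\bigl(v(x+u/(2v))^2\bigr)$, but the factor $\chi^j(x)$ is still evaluated at $x$, not at the shifted variable, so the sum becomes $\psi(-u^2/(4v))\sum_{w}\chi^j\bigl(w-u/(2v)\bigr)\psi(vw^2)$ and does \emph{not} collapse to a product of Gauss sums; these are genuinely harder Weil-type sums with no standard closed form. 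This is precisely the difficulty the paper's argument is built to avoid: since $2N\mid(q+1)$ gives $\Bbb F_q^\ast\subset C_0^{(2N,r)}\subset C_0^{(N,r)}$, the paper writes the sum over $y\in\Bbb F_q^\ast$ via $y\mapsto y^2$ and $y\mapsto \beta y^2$ and substitutes $z=xy$; because $y\in C_0^{(N,r)}$ this is a bijection of $C_0^{(N,r)}$ and turns $\psi(y^2ax+y^2bx^2)$ into $\psi(yaz+bz^2)$, decoupling the quadratic term from $y$. The $y$-sum then yields the indicator of $\mbox{Tr}_{r/q}(az)=0$, and only Gauss periods of orders $N$ and $2N$ survive.

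A second gap: you locate the Table 3 / Table 4 dichotomy in the sign of semi-primitive Gauss sums, but Theorem 4.1 leaves the Gauss periods $\eta_j^{(2N,r)}$ unevaluated (they are computed only later, in Theorem 4.2). The parity of $(q+1)/(2N)$ enters for a different reason: since $r=q^2$, the solutions of $\mbox{Tr}_{r/q}(az)=0$ and of $\mbox{Tr}_{r/q}(ba^{-2})=0$ are $z=a^{-1}\alpha^{(q+1)/2}v$ and $b=a^2\alpha^{(q+1)/2}v$ with $v\in\Bbb F_q^\ast$, and one must determine which coset $C_j^{(2N,r)}$ contains $\alpha^{(q+1)/2}$; this is $j=0$ when $(q+1)/(2N)$ is even and $j=N$ when it is odd. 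Your outline has no mechanism for producing these trace conditions, yet they are exactly what generate the weights differing by $q$ (the rows with frequencies $\frac{(q-1)(r-1)}{N}$ and $\frac{r-1}{N}(\frac{r-1}{2N}-q+1)$) in both tables, so the final multiplicity count cannot be assembled from your intermediate data.
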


\[ \begin{tabular} {c} Table 3. Weight distribution of $\mathcal C_2$ when $\frac {q+1} {2N}$ is even \\
\begin{tabular}{|c|c|}
  \hline
 Weight & Frequency \\
  \hline
        0   &    1\\
   $\frac {(q-1)(r-1)} {q N}-\frac {q-1} q \eta_i^{(N, r)}$  & $\frac {r-1} N (0 \leq i \leq N-1)$ \\
   $\frac {(q-1)(r-1)} {q N}-\frac {2(q-1)} q \eta_j^{(2N, r)}$ &  $\frac {r-1} {2N} (0 \leq j \leq 2N-1)$  \\
  $\frac {(q-1)(r-1)} {qN}+\frac 2 q \eta_j^{(2N, r)} $ & $\frac {(N-1)(r-1)^2} {2N^2} (0 \leq j \leq 2N-1) $ \\
  $\frac {(q-1)(r-1)} {qN}-q+1+\frac 2 q \eta_0^{(2N, r)} $ & $\frac {(q-1)(r-1)} N$ \\
  $\frac {(q-1)(r-1)} {qN}+1+\frac 2 q \eta_0^{(2N, r)} $ & $\frac {r-1} N(\frac {r-1} {2N}-q+1)$ \\
  $\frac {(q-1)(r-1)} {qN}+1+\frac 2 q \eta_j^{(2N, r)} $ & $\frac {(r-1)^2} {2N^2} (1 \leq j \leq 2N-1)$  \\
  \hline
\end{tabular}
\end{tabular}
\]

\[ \begin{tabular} {c} Table 4. Weight distribution of $\mathcal C_2$ when $\frac {q+1} {2N}$ is odd \\
\begin{tabular}{|c|c|}
  \hline
 Weight & Frequency \\
  \hline
        0   &    1\\
   $\frac {(q-1)(r-1)} {q N}-\frac {q-1} q \eta_i^{(N, r)}$  & $\frac {r-1} N (0 \leq i \leq N-1)$ \\
   $\frac {(q-1)(r-1)} {q N}-\frac {2(q-1)} q \eta_j^{(2N, r)}$ &  $\frac {r-1} {2N} (0 \leq j \leq 2N-1)$  \\
  $\frac {(q-1)(r-1)} {qN}+\frac 2 q \eta_j^{(2N, r)} $ & $\frac {(N-1)(r-1)^2} {2N^2} (0 \leq j \leq 2N-1) $ \\
  $\frac {(q-1)(r-1)} {qN}-q+1+\frac 2 q \eta_N^{(2N, r)} $ & $\frac {(q-1)(r-1)} N$ \\
  $\frac {(q-1)(r-1)} {qN}+1+\frac 2 q \eta_N^{(2N, r)} $ & $\frac {r-1} N(\frac {r-1} {2N}-q+1)$ \\
  $\frac {(q-1)(r-1)} {qN}+1+\frac 2 q \eta_j^{(2N, r)} $ & $\frac {(r-1)^2} {2N^2} (0 \leq j \leq 2N-1, j \neq N)$  \\
  \hline
\end{tabular}
\end{tabular}
\]

\begin{proof}
Let $\alpha$ be a primitive element of $\Bbb F_r$ and $\beta=\alpha^{\frac {r-1}{q-1}}=\alpha^{q+1}$. Then
$\Bbb F_q^\ast =\langle \beta \rangle$ and $\Bbb F_q^\ast=C_0^{(2,q)} \cup \beta C_0^{(2, q)}$, where
$C_0^{(2, q)}=\langle \beta^2 \rangle$.  Since $q \equiv -1 \pmod {2N}$,
we have $\Bbb F_q^\ast \subset C_0^{(2N, r)} \subset C_0^{(N, r)}$
and $yC_0^{(N, r)}=C_0^{(N, r)}$ for each $y \in \Bbb F_q^\ast$.

 If  $a=0$ and $b=0$, then we have
  $$Z(r, a, b)=\frac {r-1} N.$$
  This value occurs once.

If $a \in C_i^{(N, r)}$ for some $i (0 \leq i \leq N-1)$ and $b=0$, then we have
 \begin{eqnarray*} Z(r,a,b)&=& \frac {r-1} {q N}+\frac 1 q \sum_{y \in \Bbb F_q^\ast} \sum_{x \in C_0^{(N, r)}}\psi(y ax) \\
&=& \frac {r-1} {q N}+\frac {q-1} q\sum_{x \in C_0^{(N, r)}}\psi(ax)
= \frac {r-1} {q N}+\frac {q-1} q\eta_i^{(N, r)}.
\end{eqnarray*}
This value occurs $\frac {r-1} N$ times.

 If $a=0$ and $b \neq 0$, then we can let $b \in C_j^{(2N, r)}$ for some $j, 0 \leq j \leq 2N-1$, by
  $2N \mid (r-1)$. We have
\begin{eqnarray*} Z(r,a,b)&=& \frac {r-1} {q N}+\frac 1 q \sum_{y \in \Bbb F_q^\ast} \sum_{x \in C_0^{(N, r)}}\psi(ybx^2)
= \frac {r-1} {q N}+\frac {q-1} q\sum_{x \in C_0^{(N, r)}}\psi(bx^2)\\
&=& \frac {r-1} {q N}+\frac {2(q-1)} q\sum_{x \in C_0^{(2N, r)}}\psi(bx)= \frac {r-1} {q N}+\frac {2(q-1)} q\eta_j^{(2N, r)}.
\end{eqnarray*}
 This value occurs $\frac {r-1} {2N}$ times.

 Now we suppose that $a \neq 0$ and $b \neq 0$. Then we have
\begin{eqnarray*} &&Z(r,a,b)= \frac {r-1} {q N}+\frac 1 q \sum_{y \in \Bbb F_q^\ast} \sum_{x \in C_0^{(N, r)}}\psi(yax+ybx^2) \\
&=& \frac {r-1} {q N}+\frac 1 q (\sum_{y \in C_0^{(2, q)}} \sum_{x \in C_0^{(N, r)}}\psi(yax+ybx^2)
+ \sum_{y \in C_1^{(2, q)}} \sum_{x \in C_0^{(N, r)}}\psi(yax+ybx^2)) \\
&=& \frac {r-1} {q N}+\frac 1 q (\sum_{y \in C_0^{(2, q)}} \sum_{x \in C_0^{(N, r)}}\psi(yax+ybx^2)
+ \sum_{y \in C_0^{(2, q)}} \sum_{x \in C_0^{(N, r)}}\psi(\beta y a x+ \beta ybx^2)) \\
&=& \frac {r-1} {q N}+\frac 1 q (\frac 1 2 \sum_{y \in \Bbb F_q^\ast} \sum_{x \in C_0^{(N, r)}}\psi(y^2ax+y^2bx^2)
+ \frac 1 2 \sum_{y \in \Bbb F_q^\ast} \sum_{x \in C_0^{(N, r)}}\psi(\beta y^2 a x+ \beta y^2bx^2)) \\
&=& \frac {r-1} {q N}+\frac 1 {2q} (\sum_{y \in \Bbb F_q^\ast} \sum_{x \in C_0^{(N, r)}}\psi(yaxy+b(x y)^2)
+  \sum_{y \in \Bbb F_q^\ast} \sum_{x \in C_0^{(N, r)}}\psi(\beta y a x y+ \beta b(x y)^2)) \\
&=& \frac {r-1} {q N}+\frac 1 {2q} (\sum_{y \in \Bbb F_q^\ast} \sum_{z \in C_0^{(N, r)}}\psi(yaz+bz^2)
+ \sum_{y \in \Bbb F_q^\ast} \sum_{z \in C_0^{(N, r)}}\psi(\beta y a z+ \beta bz^2)) \\
&=& \frac {r-1} {q N}+\frac 1 {2q} (\sum_{z \in C_0^{(N, r)}}\psi(bz^2)\sum_{y \in \Bbb F_q^\ast} \psi(y a z)
+ \sum_{z \in C_0^{(N, r)}}\psi(\beta bz^2)\sum_{y \in \Bbb F_q^\ast} \psi(\beta y a z) )\\
&=& \frac {r-1} {q N}+\frac 1 {2q} (\sum_{z \in C_0^{(N, r)}}\psi(bz^2) +\sum_{z \in C_0^{(N, r)}}\psi(\beta bz^2)) \sum_{y \in \Bbb F_q^\ast}\psi(y\mbox{Tr}_{r/q}( a z)).
\end{eqnarray*}
Note that $\beta=\alpha^{q+1}\in  \Bbb F_q^*$.
Suppose that $0\ne a\in C_0^{(N,r)} $ and $\mbox{Tr}_{r/q}( a z)=0$. Then we have $az+(az)^q=0$ and
 $$z=a^{-1}\alpha^{\frac {q+1}2} v \mbox{ for all } v \in \Bbb F_q^\ast.$$
 This means that there exist exactly $q-1$ solutions  $z \in C_0^{(N, r)}$ such that $\mbox{Tr}_{r/q}( a z)=0$ if $a \in C_0^{(N, r)}$ and
 there exists no solution  $z \in C_0^{(N, r)}$ such that $\mbox{Tr}_{r/q}( a z)=0$ if $a \not \in C_0^{(N, r)}$.

If $a \in C_i^{(N,r)}$ for all $i=1, 2, \ldots, N-1$ and $b \in C_j^{(2N, r)}$ for some $j$ $ (0 \leq j \leq 2N-1)$,
then we have  $\mbox{Tr}_{r/q}( a z)\neq 0$ and
\begin{eqnarray*} Z(r,a,b)&=& \frac {r-1} {q N}+\frac 1 {2q} (\sum_{z \in C_0^{(N, r)}}\psi(bz^2)+\sum_{z \in C_0^{(N, r)}}\psi(\beta bz^2))\cdot(-1)\\
&=& \frac {r-1} {q N}-\frac 2 q \sum_{z \in C_0^{(2N, r)}}\psi(bz)
=  \frac {r-1} {q N}-\frac 2 q\eta_j^{(2N, r)}.
\end{eqnarray*}
  This value occurs $\frac {r-1} N \cdot (N-1) \cdot \frac {r-1} {2N}=\frac {(N-1)(r-1)^2} {2N^2}$ times.

  In the following, we consider the case $a \in C_0^{(N,r)}$. We have
  \begin{eqnarray*} \Delta&= &(\sum_{z \in C_0^{(N, r)}}\psi(bz^2)+\sum_{z\in C_0^{(N,r)}}
  \psi(\beta bz^2))\sum_{y \in \Bbb F_q^\ast} \psi(y\mbox{Tr}_{r/q}( a z))\\ &=&(\sum_{\small\begin{array}{c} z \in C_0^{(N, r)} \\ \mbox{Tr}_{r/q}( a z)=0 \end{array}}
  \psi(bz^2) +\sum_{\small\begin{array}{c} z \in C_0^{(N, r)} \\ \mbox{Tr}_{r/q}( a z)=0 \end{array}}
  \psi(\beta bz^2)) \cdot (q-1)\\ &+& (\sum_{\small\begin{array}{c} z \in C_0^{(N, r)} \\ \mbox{Tr}_{r/q}( a z)\neq 0 \end{array}}\psi(bz^2)+\sum_{\small\begin{array}{c} z \in C_0^{(N, r)} \\ \mbox{Tr}_{r/q}( a z)\neq 0 \end{array}}\psi(\beta bz^2)) \cdot (-1)
 \\
  &=&(\sum_{\small\begin{array}{c} z \in C_0^{(N, r)} \\ \mbox{Tr}_{r/q}( a z)=0 \end{array}}
  \psi(bz^2) +\sum_{\small\begin{array}{c} z \in C_0^{(N, r)} \\ \mbox{Tr}_{r/q}( a z)=0 \end{array}}
  \psi(\beta bz^2))\cdot q \\&+& (\sum_{z \in C_0^{(N, r)}}\psi(bz^2) +\sum_{z \in C_0^{(N, r)}}\psi(\beta bz^2) )\cdot (-1)\\
  &=&q\sum_{v \in \Bbb F_q^\ast}
  \psi(b(a^{-1}\alpha^{\frac {q+1} 2}v)^2)+ q\sum_{v \in \Bbb F_q^\ast}
  \psi(\beta b(a^{-1}\alpha^{\frac {q+1} 2}v)^2)-2 \sum_{z \in C_0^{(N, r)}}\psi(bz^2) \\
  &=&q\sum_{v \in \Bbb F_q^\ast}
  \psi(ba^{-2}\alpha^{q+1}v^2)+ q\sum_{v \in \Bbb F_q^\ast}
  \psi(\beta ba^{-2}\alpha^{q+1}v^2)-4 \sum_{z \in C_0^{(2N, r)}}\psi(b z) \\
  &=&2q\sum_{v \in C_0^{(2, q)}}
  \psi(ba^{-2}v) + 2q\sum_{v \in C_0^{(2, q)}}
  \psi(ba^{-2}\beta v)-4 \sum_{z \in C_0^{(2N, r)}}\psi(b z) \\
  &=&2q\sum_{v \in \Bbb F_q^\ast}
  \psi(ba^{-2}v) -4 \sum_{z \in C_0^{(2N, r)}}\psi(b z)\\
  &=&2q\sum_{v \in \Bbb F_q^\ast}
  \psi(v\mbox{Tr}_{r/q}(ba^{-2})) -4 \sum_{z \in C_0^{(2N, r)}}\psi(b z).
\end{eqnarray*}

Suppose that $\mbox{Tr}_{r/q}(ba^{-2})=0$. Then we have $ba^{-2}+b^qa^{-2q}=0$ and
$$b=a^2 \alpha^{\frac {q+1} 2}v \mbox{ for all } v \in \Bbb F_q^\ast.$$
This means that there exist exactly $q-1$ solutions $b \in C_{\frac {q+1} 2}^{(2N, r)}$
such that $\mbox{Tr}_{r/q}(ba^{-2})=0$ for each $a \in C_0^{(N,r)}$,
where $C_{\frac {q+1} 2}^{(2N, r)}=C_{\frac {q+1} 2 \pmod {2N}}^{(2N, r)}$.

\begin{enumerate}
  \item If $\frac {q+1} {2N}$ is even, then $\frac {q+1} 2 \equiv 0 \pmod {2N}$.

   If $a \in C_0^{(N,r)}$ and $b \in C_0^{(2N, r)}$ satisfy $\mbox{Tr}_{r/q}(ba^{-2})=0$, then we have
 $$Z(r, a, b)=\frac {r-1} {q N}+q-1-\frac 2 q \eta_0^{(2N, r)}.$$
 This value occurs $\frac {(q-1)(r-1)} N$ times.

 If $a \in C_0^{(N,r)}, b \in C_0^{(2N, r)}$, and $\mbox{Tr}_{r/q}(ba^{-2})\neq 0$, then we have
 $$Z(r, a, b)=\frac {r-1} {q N}-1-\frac 2 q \eta_0^{(2N, r)}.$$
 This value occurs $\frac {r-1} N \cdot (\frac {r-1} {2N}-q+1)$ times.

 If $a \in C_0^{(N,r)}$ and $b \in C_j^{(2N, r)}$ for some $j$ $ (1 \leq j \leq 2N-1)$, then $\mbox{Tr}_{r/q}(ba^{-2})\neq 0$ and
 $$Z(r, a, b)=\frac {r-1} {q N}-1-\frac 2 q \eta_j^{(2N, r)}.$$
 This value occurs $\frac {r-1} N \cdot \frac {r-1} {2N}$ times.
  \item  If $\frac {q+1} {2N}$ is odd, then $\frac {q+1} 2 \equiv N \pmod {2N}$.

   If $a \in C_0^{(N,r)}$ and $b \in C_N^{(2N, r)}$ satisfy $\mbox{Tr}_{r/q}(ba^{-2})=0$, then we have
 $$Z(r, a, b)=\frac {r-1} {q N}+q-1-\frac 2 q \eta_N^{(2N, r)}.$$
 This value occurs $\frac {(q-1)(r-1)} N$ times.

 If $a \in C_0^{(N,r)}, b \in C_N^{(2N, r)}$, and $\mbox{Tr}_{r/q}(ba^{-2})\neq 0$, then we have
 $$Z(r, a, b)=\frac {r-1} {q N}-1-\frac 2 q \eta_N^{(2N, r)}.$$
 This value occurs $\frac {r-1} N \cdot (\frac {r-1} {2N}-q+1)$ times.

 If $a \in C_0^{(N,r)}$ and $b \in C_j^{(2N, r)}$ for some $j$ $(0 \leq j \leq 2N-1, j \neq N)$, then $\mbox{Tr}_{r/q}(ba^{-2})\neq 0$ and
 $$Z(r, a, b)=\frac {r-1} {q N}-1-\frac 2 q \eta_j^{(2N, r)}.$$
 This value occurs $\frac {r-1} N \cdot \frac {r-1} {2N}$ times.
\end{enumerate}

 Note that $W_H(c(a, b))= n-Z(r, a, b)$. Then we can obtain Table 3 and Table 4. This completes the proof.
 \end{proof}

We have determined the weight distribution of the cyclic code $\mathcal C_2$ when Gauss periods of order $2N$ are known.
Then we have the following theorem.

\begin{thm}
Assume that there exists the least positive integer $e$ such that
$p^e \equiv -1\pmod {2N}$, we know that $p$ is odd. Let $q=p^{ef}$ for some positive integer $f$ and $r=q^2=p^{2ef}$.
\begin{enumerate}
  \item If $f$ and $\frac {p^e+1} {2N}$ are both odd, then
  the weight distribution of $\mathcal C_2$ can be given by Table 5.
  \item In all other cases,
  the weight distribution of $\mathcal C_2$ can be given by Table 6.
\end{enumerate}
\end{thm}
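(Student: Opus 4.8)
The plan is to obtain Tables 5 and 6 simply by feeding the explicit semi-primitive Gauss sums of Lemma 2.4 into the general weight distribution of Theorem 4.2. First I would record the structural facts forced by the hypotheses. Since $2 \mid 2N$, the congruence $p^e \equiv -1 \pmod{2N}$ already gives $p^e \equiv 1 \pmod 2$, so $p$ is odd, as asserted. Because we are specializing Theorem 4.2, we are in the regime $2N \mid (q+1)$; writing $q = p^{ef} \equiv (-1)^f \pmod{2N}$ then forces $f$ to be odd (for $N>1$ the value $(-1)^f + 1$ is $\equiv 0 \pmod{2N}$ only when $f$ is odd). Consequently the two clauses of Lemma 2.4 applied with modulus $2N$ and $r = p^{2ef}$ are separated purely by the parity of $\frac{p^e+1}{2N}$: clause (1) holds exactly when $f$ and $\frac{p^e+1}{2N}$ are both odd, which is case (1) of the theorem, and clause (2) covers the rest.

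Second I would evaluate every Gauss period occurring in Tables 3 and 4. Using $\sqrt r = q$, Lemma 2.4 gives the order-$2N$ periods in two values: a distinguished period $\eta_N^{(2N,r)} = \frac{(2N-1)q-1}{2N}$ with $\eta_j^{(2N,r)} = -\frac{q+1}{2N}$ for $j \neq N$ in case (1), and a distinguished period $\eta_0^{(2N,r)} = \frac{(-1)^{f+1}(2N-1)q-1}{2N}$ with $\eta_j^{(2N,r)} = \frac{(-1)^f q - 1}{2N}$ for $j \neq 0$ in case (2). The order-$N$ periods needed in the second row of Tables 3 and 4 are then read off without a separate application of Lemma 2.4, using the disjoint union $C_i^{(N,r)} = C_i^{(2N,r)} \cup C_{i+N}^{(2N,r)}$, which yields $\eta_i^{(N,r)} = \eta_i^{(2N,r)} + \eta_{i+N}^{(2N,r)}$; in either case this makes $\eta_0^{(N,r)}$ the single exceptional value and all other $\eta_i^{(N,r)}$ equal.

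Third I would check the matching of the distinguished index with the exceptional rows of Theorem 4.2 and then substitute. The link is the parity identity $\frac{q+1}{2N} = \frac{p^{ef}+1}{2N} = \frac{p^e+1}{2N} \cdot \frac{p^{ef}+1}{p^e+1}$, in which the second factor is a sum of $f$ odd terms and hence odd (as $f$ is odd), so $\frac{q+1}{2N}$ has the same parity as $\frac{p^e+1}{2N}$. Thus case (1) of the theorem puts us in Table 4, whose exceptional rows single out $\eta_N^{(2N,r)}$, exactly the distinguished period of Lemma 2.4(1); case (2) puts us in Table 3, whose exceptional rows single out $\eta_0^{(2N,r)}$, the distinguished period of Lemma 2.4(2). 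Inserting the explicit values and collecting the frequencies then produces Tables 5 and 6, respectively.

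The routine bulk of the argument is the substitution and simplification, but the genuinely delicate point is the bookkeeping: after replacing the periods by their two possible numerical values, several of the distinct period-indexed rows of Tables 3 and 4 collapse onto equal Hamming weights, so their frequencies must be amalgamated correctly, and one must verify that the exceptional index ($0$ versus $N$) lands in the row of Theorem 4.2 that actually carries the special frequency $\frac{(q-1)(r-1)}{N}$. Getting these parity alignments and frequency merges right is where the proof must be handled with care.
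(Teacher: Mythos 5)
Your proposal is correct and follows essentially the same route as the paper's own proof: apply Lemma 2.4 with modulus $2N$, obtain the order-$N$ periods from $\eta_i^{(N,r)}=\eta_i^{(2N,r)}+\eta_{i+N}^{(2N,r)}$, transfer the parity of $\frac{p^e+1}{2N}$ to $\frac{q+1}{2N}$ via the factorization of $p^{ef}+1$, and substitute the resulting values into Tables~4 and~3 of the preceding theorem. Your extra observation that $2N\mid(q+1)$ forces $f$ to be odd is a small sharpening that the paper leaves implicit, but it does not alter the argument.
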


\[ \begin{tabular} {c} Table 5. The case (1) of Theorem 4.2. \\
\begin{tabular}{|c|c|}
  \hline
 Weight & Frequency \\
  \hline
        0   &    1\\
   $\frac {r-1} N-q+1$  & $\frac {r-1} N$ \\
   $\frac {r-1} N$ &  $\frac {(4N-3)(r-1)} {2N}$  \\
  $\frac {r-1} N-2q+2$ & $\frac {r-1} {2N} $ \\
  $\frac {(q+1)(q-2)} N+2$ & $\frac {(N-1)(r-1)^2} {2N^2}$ \\
  $\frac {(q+1)(q-2)} N$   &  $\frac {(N-1)(2N-1)(r-1)^2} {2N^2}$ \\
  $\frac {(q+1)(q-2)} N-q+3$   &  $\frac {(q-1)(r-1)} N$ \\
  $\frac {(q+1)(q-2)} N+3$ & $\frac {r-1} N(\frac {r-1} {2N}-q+1)$ \\
  $\frac {(q+1)(q-2)} N+1$ & $\frac {(2N-1)(r-1)^2} {2N^2}$  \\
  \hline
\end{tabular}
\end{tabular}
\]

\[ \begin{tabular} {c} Table 6. The case (2) of Theorem 4.2. \\
\begin{tabular}{|c|c|}
  \hline
 Weight & Frequency \\
  \hline
        0   &    1\\
   $\frac {(q-1)(q+(-1)^f(N-1))} N$  & $\frac {r-1} N$ \\
   $\frac {(q-1)(q-(-1)^f)} N$ &  $\frac {(4N-3)(r-1)} {2N}$  \\
  $\frac {(q-1)(q+(-1)^f(2N-1))} N$ & $\frac {r-1} {2N} $ \\
  $\frac {q^2-q-1+(-1)^f} N-2\cdot (-1)^f$ & $\frac {(N-1)(r-1)^2} {2N^2}$ \\
  $\frac {q^2-q-1+(-1)^f} N$   &  $\frac {(N-1)(2N-1)(r-1)^2} {2N^2}$ \\
  $\frac {q^2-q-1+(-1)^f} N-2\cdot (-1)^f-q+1$   &  $\frac {(q-1)(r-1)} N$ \\
  $\frac {q^2-q-1+(-1)^f} N-2\cdot (-1)^f+1$ & $\frac {r-1} N(\frac {r-1} {2N}-q+1)$ \\
  $\frac {q^2-q-1+(-1)^f} N+1$ & $\frac {(2N-1)(r-1)^2} {2N^2}$  \\
  \hline
\end{tabular}
\end{tabular}
\]

\begin{proof}
  (1) If $f, p$, and $\frac {p^e+1} {2N}$ are all odd, then by Lemma 2.4 we have
  $$\eta_N^{(2N, r)}=\frac {(2N-1)q-1} {2N}, \eta_j^{(2N, r)}=-\frac {q+1}{2N} \mbox{ for } 0 \leq j \leq 2N-1, j \neq N.$$
  Thus
  $$\eta_0^{(N,r)}=\eta_0^{(2N,r)}+\eta_N^{(2N,r)}=\frac {(N-1)q-1} N,$$
  $$\eta_i^{(N,r)}=\eta_i^{(2N,r)}+\eta_{i+N}^{(2N,r)}=-\frac {q+1} N \mbox{ for } 1 \leq i \leq N-1.$$
  Note that $$\frac {q+1} {2N}=\frac {p^{ef}+1} {2N}=\frac {p^e+1} {2N} \cdot (p^{e(f-1)}+(-1)^{f-2}p^{e(f-2)}+\cdots+(-1)p^e+1)$$ is odd
  since $f, p$, and $\frac {p^e+1} {2N}$ are all odd.
  Then Table 5 can be obtained by Table 4 of Theorem 4.1.

  (2) In all other cases, by Lemma 2.4 we have
  $$\eta_0^{(2N, r)}=\frac {(-1)^{f+1}(2N-1)q-1} {2N}, \eta_j^{(2N, r)}=\frac {(-1)^fq-1}{2N} \mbox{ for } 1 \leq j \leq 2N-1.$$
  Thus
  $$\eta_0^{(N,r)}=\eta_0^{(2N,r)}+\eta_N^{(2N,r)}=\frac {(-1)^{f+1}(N-1)q-1} N,$$
  $$\eta_i^{(N,r)}=\eta_i^{(2N,r)}+\eta_{i+N}^{(2N,r)}=\frac {(-1)^fq-1} N \mbox{ for } 1 \leq i \leq N-1.$$
  Note that there is at least one of  $f$ and $\frac {p^e+1} {2N}$ is even and $p$ is odd.
  Then $$\frac {q+1} {2N}=\frac {p^{ef}+1} {2N}=\frac {p^e+1} {2N} \cdot (p^{e(f-1)}+(-1)^{f-2}p^{e(f-2)}+\cdots+(-1)p^e+1)$$ is even and
  Table 6 can be obtained by Table 3 of Theorem 4.1.
\end{proof}

\begin{exa}
Let $q=p=3, r=9$, and $N=2$. By Table 5 we known that $\mathcal C_2$ is a $[4, 3, 2]$ cyclic code over $\Bbb F_3$ and thus is optimal
with respect to Singleton bound. The weight enumerator of such cyclic code is
$$1+12x^2+8x^3+6x^4.$$
\end{exa}

\begin{exa} Let $q=p=7$, $r=49$, and $N=2$. By Table 6 we known that $\mathcal C_2$ is a $[24, 4, 12]$ cyclic code over $\Bbb F_7$
and the weight enumerator is
$$1+12x^{12}+144x^{16}+24x^{18}+864x^{20}+864x^{21}+288x^{22}+144x^{23}+60x^{24}.$$
 \end{exa}

 \begin{exa} Let $q=p=5$, $r=25$, and $N=3$. By Table 5 we known that $\mathcal C_2$ is an $[8, 3, 4]$ cyclic code over $\Bbb F_5$
and the weight enumerator is
$$1+8x^4+64x^6+32x^7+20x^8.$$
 \end{exa}

We also have a tight bound on the minimum weight of $\mathcal C_2$ which is denote $W_H(\mathcal C_2)$.

\begin{thm}
 Let $r=q^2$, $2N \mid (q+1)$, and $5 \leq N < \frac {\sqrt r} 2$. Then we have
 $$W_H(\mathcal C_2) \geq \frac {(q-1)(r-(2N-1) \sqrt r)} {qN}.$$
\end{thm}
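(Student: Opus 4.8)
The plan is to combine the weight distribution of Theorem 4.1 (Tables 3 and 4) with the Gauss period bound of Lemma 2.6, exactly in the spirit of the proof of Theorem 3.3. Since $r=q^2$ we have $\sqrt r=q$, and Lemma 2.6 applied to orders $N$ and $2N$ yields
$$\eta_i^{(N,r)}\le \frac{(N-1)q-1}{N},\qquad -\frac{(2N-1)q+1}{2N}\le \eta_j^{(2N,r)}\le \frac{(2N-1)q-1}{2N}.$$
Every nonzero weight in Tables 3 and 4 has the shape $\frac{(q-1)(r-1)}{qN}$, plus a (possibly zero) integer constant from $\{-q+1,\,+1\}$, plus a scalar multiple of a single Gauss period of order $N$ or $2N$. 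First I would lower-bound each weight by substituting the appropriate extreme value from the displayed inequalities: the upper bound on the period when its coefficient is negative, and the lower bound when its coefficient is positive.

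The key observation is that the order-$2N$ row $\frac{(q-1)(r-1)}{qN}-\frac{2(q-1)}{q}\eta_j^{(2N,r)}$ is the binding one. Indeed, using $\eta_j^{(2N,r)}\le \frac{(2N-1)q-1}{2N}$ gives
$$\frac{(q-1)(r-1)}{qN}-\frac{(q-1)((2N-1)q-1)}{qN}=\frac{(q-1)(r-(2N-1)q)}{qN}=\frac{(q-1)(r-(2N-1)\sqrt r)}{qN},$$
which is precisely the claimed bound. It therefore remains only to check that every other row produces a lower bound at least this large, so that the minimum nonzero weight cannot fall below the order-$2N$ row value.

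For that verification I would exploit the arithmetic constraints: $2N\mid(q+1)$ forces $q\equiv-1\pmod{2N}$, while $N<\sqrt r/2=q/2$ excludes $q=2N-1$, so in fact $q\ge 4N-1$; together with $N\ge5$ this makes the remaining comparisons elementary. The order-$N$ row gives the bound $\frac{(q-1)(r-(N-1)q)}{qN}$, which is larger since $N-1<2N-1$; the two rows carrying an additive $+1$ dominate the corresponding $+0$ rows automatically; and the rows with coefficient $+\tfrac2q$ on the period reduce, after clearing $\tfrac1{qN}$ and substituting $r=q^2$, to one-line linear inequalities in $q$ (for the $+\tfrac2q$ row, $q(2N-1)\ge 4N-1$, and for the delicate $-q+1$ row, $q(N-1)\ge 3N-1$), all valid once $q\ge 4N-1$ and $N\ge5$. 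Collecting these comparisons establishes the theorem. I expect no conceptual obstacle; the only real work is organizing this case-by-case bookkeeping over the six weight types cleanly, each reducing to a routine polynomial inequality.
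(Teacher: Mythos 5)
Your proposal is correct and follows exactly the route the paper intends: the paper's own ``proof'' merely says that the result follows from Lemma 2.6 and Theorem 4.1 by comparing the weights and omits all details, while you supply those details, correctly identifying the order-$2N$ row as the binding one and reducing the remaining comparisons (e.g. $(N-1)q\ge 3N-1$ using $q\ge 4N-1$ from $2N\mid(q+1)$ and $N<q/2$) to routine inequalities.
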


\begin{proof}
By Lemma 2.6 and Theorem 4.1 we only need to compare some values of the weights.
It is not difficult and then we omit
the proof here.
\end{proof}

\end{document}